\newtheorem{Lemma}{Lemma}
\newtheorem{theorem}{Theorem}
\theoremstyle{definition}
\newtheorem*{definition}{Definition}
\newtheorem{Corollary}{Corollary}
\newtheorem*{remark}{Remark}
\begin{document}
\newcommand{\problem}[1]{\newpage\subsection{#1}}
\newcommand{\done}[1]{$\hfill\qed$\newline \textbf{Collaborators:} #1}
\newcommand{\nextProblem}[2]{\done{#1}\problem{#2}}
\newcommand{\mb}[1]{\mathbb{#1}}
\newcommand{\mc}[1]{\mathcal{#1}}
\newcommand{\Image}{\text{Im }}
\newcommand{\Gr}{\text{Gr}}
\newcommand{\Mat}{\text{Mat}}
\newcommand{\Le}{\scalebox{-1}[1]{L}}
\newcommand{\eps}{\varepsilon}
\newcommand{\tld}[1]{\widetilde{#1}}
\newcommand{\ct}[1]{\widehat{#1}}

\title{An Optimal Algorithm for Online Freeze-tag}
\date{\today}
\author{Josh Brunner\\ brunnerj@mit.edu\\ Cambridge, MA \and  Julian Wellman\\ wellman@mit.edu \\ Cambridge, MA}

\maketitle

\begin{abstract}
In the \emph{freeze-tag problem}, one active robot must wake up many frozen robots. The robots are considered as points in a metric space, where active robots move at a constant rate and activate other robots by visiting them. In the (time-dependent) \emph{online} variant of the problem, frozen robots are not revealed until a specified time. Hammar, Nilsson, and Persson have shown that no online algorithm can achieve a competitive ratio better than $7/3$ for online freeze-tag, and asked whether there is any $O(1)$-competitive algorithm. In this paper, we provide a $(1+\sqrt{2})$-competitive algorithm for online time-dependent freeze-tag, and show that no algorithm can achieve a lower competitive ratio on every metric space.
\end{abstract}

\section{Introduction}
In the freeze-tag problem (FTP), there are $n$ robots in a metric space. Each robot is either awake (active) or asleep (frozen), and initially only one is awake. The goal is to get all the active robots to wake up all the asleep robots in the minimum possible time. Whenever an active robot reaches an asleep robot, that robot wakes up and can help wake up additional robots. All robots move at the same constant rate, but only active robots may move. A solution to the problem consists of a route which wakes up all of the robots, and is optimal if it wakes up all the robots in the minimal possible time.

The freeze-tag problem can be interpreted as finding a minimum-depth directed spanning tree on a set of points, where each vertex has out-degree at most 2. The first work was done in this language, e.g. in~\cite{konemann05}, and was motivated by (for example) the IP multicast problem, where a server needs to distribute information to a set of hosts. The freeze-tag problem was first introduced under this name in~\cite{arkin06}, and finding an optimal solution was shown to be NP-hard. Further work has mostly centered around approximation algorithms such as in~\cite{arkin03improved} and~\cite{sztainberg02}. No PTAS for general metric spaces has been found, though much progress has been made for Euclidean metrics in \cite{moezkarimi14} and \cite{yazdi15}, finding a linear-time PTAS in some cases. In~\cite{arkin06} it is shown that $5/3$-approximation is NP-hard for general metrics arising from weighted graphs, so a PTAS does not exists assuming $P\neq NP$.

In this paper, we focus on the online version of this problem, where the asleep robots, or requests, are not known in advance. Each request is released at a certain time, before which the location, time, or existence of the request is not known. The goal is to minimize the time when the last robot is awakened. This variant was named \emph{time-dependent} freeze-tag (TDFT) by Hammar, Nilsson, and Persson~\cite{hammar06}. We feel that this variant models the schoolyard game of freeze-tag more closely, since one doesn't know where or when the next person will get tagged, and also may be more relevant for some applications, where, say, requests for information are unpredictable.

Hammar, Nilsson, and Persson are concerned with the \emph{competitive ratio} achieved by an online algorithm, to model the worst-case performance. They show that no algorithm can achieve a competitive ratio lower than $7/3$, by giving a specific metric where this is not possible~\cite[Theorem 5]{hammar06}. They ask whether there is any online algorithm that achieves a constant competitive ratio. We will slightly improve their bound (through a generalized construction) to show that no competitive ratio lower than $1+\sqrt{2}$ is possible, and give an algorithm which achieves this ratio.

\section{Setup and Results}

Recall that a metric space $M$ is a set equipped with a distance function $d : M\times M \to \mb R$ which is non-negative, symmetric and satisfies the triangle inequality. Metric spaces induce a topology generated by the open balls $B_\eps(x)\coloneqq  \{y\in M : d(x,y) < \eps\}$, for $\eps > 0$ and $x\in M$. Examples of metric spaces include those arising from weighted graphs satisfying the triangle inequality, or, say, Euclidean spaces. For time-dependent freeze-tag, we will exclusively use a special class of metric spaces.
\begin{definition}\label{metrics}
A metric space $(M,d)$ is \emph{strongly connected} if for any two points $x,y \in M$, there exists a continuous function $f: [0,d(x,y)]\to M$ with $f(0) = x$ and $f(d(x,y)) = y$, and for all $z \in [0,d(x,y)]$, we have $d(f(z),x) + d(f(z,y)) = d(x,y)$.
\end{definition}
Intuitively, a metric space is strongly connected if for every pair of points $x,y\in M$ there is a path from $x$ to $y$ of length $d(x,y)$. If a connected metric space $(M,d)$ is not strongly connected, there is a metric space $(M,d')$ which is, where $d'(x,y)$ is the minimal value over all (continuous) paths from $x$ to $y$ of the maximal value of $d(w,x) + d(w,y)$, where $w$ ranges over all points on the path. This modification is natural, because it makes the distance function actually reflect the time required to move between points. 

An instance of the freeze-tag problem consists of a list of the $n$ positions $p_0,\ldots,p_{n-1}$ of the robots in a metric space $M$. The point $p_0$ denotes the starting position of the one active robot, while $p_1,\ldots,p_{n-1}$ are the positions of the frozen robots. We refer to the robot which began at position $p_i$ by $r_i$. Solutions to the freeze-tag problem can be considered as binary trees rooted at $p_0$ which span the positions $p_i$. Then the FTP is equivalent to finding the spanning binary tree rooted at $p_0$ which has the minimal possible weighted depth.

An instance of the online (time-dependent) problem consists of the same points $p_i$, but with associated release times $t_i$. We assume that $0 = t_0 \leq t_1 \leq \cdots \leq t_n$. The offline (normal FTP) problem is the special case where $t_i = 0$ for all $i$. The robots are denoted $r_i = (p_i,t_i)$. Time-dependent freeze tag necessarily takes place in a strongly connected metric space, so that robots can take paths between any two points $x,y$, which take $d(x,y)$ time to complete. We do not lose much by restricting to strongly connected spaces, because as noted above any connected space can be modified to be strongly connected, and it doesn't make sense to play freeze-tag on disconnected spaces.

A solution to an instance of the TDFT problem consists of a path in the metric space which unfreezes each robot, while an algorithm for TDFT gives a strategy which says how to move the active robots in any instance of the problem, possibly depending on the metric. An optimal solution to an instance consists of a \emph{optimal scheduling tree}, which unfreezes each robot no earlier than it is released, and minimizes the time which the last robot is unfrozen. The problem of finding the optimal scheduling tree for a given input is NP-hard \cite{arkin06}, but it is at least computable. We seek to minimize the \emph{competitive ratio} of an algorithm $A$ for time-dependent freeze-tag. For each instance $\sigma$ for TDFT, there is an associated time required for the optimal scheduling tree, denoted $OPT(\sigma)$, and a time which the algorithm's solution takes, $A(\sigma)$. We want to minimize the competitive ratio, $R \coloneqq \max\limits_\sigma \frac{A(\sigma)}{OPT(\sigma)}$.

In~\cite{hammar06}, Hammar et. al. give an example of a metric space where no algorithm can achieve a competitive ratio lower than 7/3 for the online time-dependent freeze-tag problem. They pose the question of whether there is any algorithm which achieves a constant competitive ratio in every metric space. We answer this question affirmatively, and describe an algorithm which we show achieves the best possible competitive ratio.

\begin{theorem}\label{main}
The algorithm described in Section~\ref{alg} is $(1+\sqrt{2})$-competitive for the online TDFT problem on every continuous metric space. Moreover, for every $\eps > 0$, there exists a continuous metric space where no algorithm is $(1+\sqrt{2} - \eps)$-competitive for the online TDFT problem.
\end{theorem}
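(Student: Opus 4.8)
The statement splits into an upper bound (the algorithm of Section~\ref{alg} is $(1+\sqrt2)$-competitive) and a matching lower bound (no algorithm beats $1+\sqrt2$ on every space), and I would treat the two halves independently.

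\medskip
\noindent\textbf{Upper bound.} Fix an instance $\sigma$ and write $T = OPT(\sigma)$. The plan is to control the algorithm's completion time through two elementary lower bounds on $T$. First, since no robot can be woken before it is released, $T \ge t_i$ for every $i$; in particular every request has appeared by time $T$. Second, because only one robot is awake at $p_0$ initially, $T$ is bounded below by the makespan of the best \emph{timed} binary tree rooted at $p_0$, i.e.\ by a reachability quantity measuring how long the doubling population of robots needs to physically cover the points $p_i$. I would then run the Section~\ref{alg} algorithm up to time $T$: by the first bound all requests are known by then, and the second bound should let me argue that the algorithm has by time $T$ produced a cluster of awake robots lying within distance $\sqrt2\,T$ of every still-frozen point. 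Finishing those points then costs at most an additional $\sqrt2\,T$, giving $A(\sigma) \le T + \sqrt2\,T = (1+\sqrt2)T$.

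The crux of this half is the travel bound. The constant $\sqrt2$ should emerge from balancing the two ways the algorithm can fall behind: it may have committed its robots in the wrong direction (costing up to a full diameter, $\le 2T$, to turn around), or it may simply be waiting on a late release. Optimizing the worst case over this trade-off — combining the triangle inequality with the specific routing rule of the algorithm and the doubling of the active population — is exactly what pins the factor to $\sqrt2$, and I expect this to be the main technical obstacle. I would verify tightness against the instance built for the lower bound.

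\medskip
\noindent\textbf{Lower bound.} For the second half I would, for each $\eps>0$, exhibit a continuous metric space together with an \emph{adaptive} adversary that defeats every deterministic online algorithm. Generalizing the fixed $7/3$-gadget of Hammar et al., the construction should be recursive: at each level there are two symmetric directions in which the next request may appear, the adversary watches where the algorithm has committed its robots and releases the next request on the side the algorithm is least prepared for, and the remaining budget is spent on a scaled copy of the same gadget. The offline optimum, knowing the whole schedule in advance, prepositions a robot at each future request location so that it arrives exactly at its release time, achieving makespan equal to the last release time; the online algorithm, forced to react, is always a fixed distance behind. Writing $R_n$ for the ratio forced by the depth-$n$ gadget, the self-similarity should yield a recursion of continued-fraction type, $R_{n+1} = 2 + 1/R_n$, whose fixed point is precisely the silver ratio $R^\star = 1+\sqrt2$ (the positive root of $x^2 = 2x+1$); taking $n$ large then drives $R_n$ within $\eps$ of $1+\sqrt2$.

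The delicate points here are that the bound must hold against \emph{all} algorithms, which I would secure by making the adversary's release rule depend only on the observable robot positions and by arguing that no response can avoid the charged detour, and that $OPT$ on the recursively defined metric must be computed exactly rather than merely bounded, so that the ratio is genuinely $R_n$. Establishing that the offline prepositioning schedule is in fact optimal, and that the continued-fraction recursion is valid at every level, is where I expect most of the remaining work to lie.
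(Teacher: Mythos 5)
Both halves of your proposal contain genuine gaps. For the upper bound, the theorem is about a \emph{specific} algorithm, and your sketch never pins one down: the paper's algorithm (Section~\ref{alg}) is pure patience --- on each release it recomputes $OPT(j)$ for the truncated instance, sends every robot back toward its own starting position, waits until time $\sqrt{2}\cdot OPT(j)$, and only then runs an optimal schedule. The entire analysis rests on the invariant (Lemma~\ref{nearby}) that at every time $T$ each robot is within $\frac{T}{1+\sqrt{2}}$ of \emph{its own starting position}, which follows because waking only ever happens after the wait until $\sqrt{2}\cdot OPT(j)$ and lasts at most $OPT(j)$. Your substitute invariant --- ``a cluster of awake robots within $\sqrt{2}\,T$ of every still-frozen point'' --- is both unproven and insufficient: proximity to the frozen points does not bound the time to finish, since the remaining wake-up tree has depth (one robot cannot visit many points in parallel), and the optimal schedule's timing assumes robots start at their home positions. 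The correct accounting is: by time $OPT$ all requests are known, robots are within $\frac{OPT}{1+\sqrt{2}}$ of home, so by time $\sqrt{2}\cdot OPT$ everyone is home, and the optimal schedule then takes $OPT$ more. Your total $T+\sqrt{2}\,T$ coincides numerically with this, but the step that makes it true is exactly what is missing.

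For the lower bound, two problems. First, your recursion $R_{n+1}=2+1/R_n$ starting from $R_0=2$ gives $R_1=2.5>1+\sqrt{2}$; continued-fraction convergents oscillate around their limit, so finite levels of your gadget would certify lower bounds \emph{exceeding} $1+\sqrt{2}$, contradicting the upper-bound half of the theorem. Any valid family must approach $1+\sqrt{2}$ strictly from below, as the paper's $R_k=1+\sqrt{2}-(\sqrt{2}-1)^{k+1}$ does (recursion $R_{k+1}=(\sqrt{2}-1)R_k+\sqrt{2}$, the time to bounce between the request tree and a shrinking ball around $p_0$). Second, and more fundamentally, an adversary that merely ``releases on the side the algorithm is least prepared for'' does not defeat algorithms that hedge, keeping some robots near the origin while committing others. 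The paper's construction (Sections~\ref{lower} and~\ref{tight}) needs a genuine dichotomy (Lemma~\ref{dichotomy}, Lemma~\ref{GetBound}): the metric has many spokes and $N_k+1$ copies of a recursive tree, so pigeonhole yields a tree with no robots nearby at time $1$; then \emph{either} the algorithm keeps $N_k-1$ robots within radius $t(\sqrt{2}-1)$ of $p_0$, in which case an induction over tree layers shows it cannot clear the tree before time $R_k$, \emph{or} at some time $t$ nearly all robots are far from $p_0$, in which case fresh requests are released at distance $t(\sqrt{2}-1)$ along otherwise-unused edges of length $1+\sqrt{2}$, forcing cost $t(1+\sqrt{2})$ against $OPT=t$. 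Without this second punishing mechanism --- and the extra long edges built into the metric to support it --- the claim that the bound holds against \emph{all} algorithms does not go through.
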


In Section~\ref{alg} we describe our algorithm and show it is $(1+\sqrt{2})$-competitive. In Section~\ref{lower}, we describe a metric space which is extremely similar to the one presented in~\cite{hammar06}, but give a different analysis that will prove a weaker bound but provide a framework, base-case, and motivation for a more complicated analysis. In Section~\ref{tight} we generalize the construction to give a family of metrics, and show that the metrics in this family give lower bounds on the competitive ratio that can be arbitrarily close to $(1+\sqrt{2})$, completing the proof of Theorem~\ref{main}.

\section{$(1+\sqrt 2)$-Competitive Algorithm}\label{alg}
The key idea of our algorithm is \emph{patience}; we hope to have the robots wait near their starting positions until all of the robots are released, at which point we can copy the optimal scheduling tree. Ideally, we don't move any robots until a time $t$ such that the optimal scheduling tree for the current input sequence would take time at most $t/\sqrt{2}$, at which point we use this scheduling tree to wake up all of the robots, taking a total time of $t(1+\frac{1}{\sqrt{2}})$, achieving the desired competitive ratio. Since we do not know the ultimate number of robots which will be released, we cannot know when we truly need to start waking up robots, and so the algorithm needs to be a little fancier. Let's describe our algorithm more precisely now.

Let $OPT(j)$ denote the minimum depth of a binary tree rooted at $p_0$ which wakes up all of the robots $r_i = (p_i,r_i)$ for $i \leq j$, which are released by time $t_j$, under the condition that robot $r_i$ is not activated until at least time $t_i$. Equivalently, $OPT(j)$ is the time of the last unfreezing in the optimal scheduling tree for the instance truncated at $r_j$. Every time a robot is released, we recompute the value $OPT(j)$. This may be an NP-hard problem, as shown in~\cite{arkin06}, but it is as least computable. 

Our online algorithm always has a schedule in mind for waking up the swarm. At every moment, all robots follow the current schedule. Every time a robot $r_j$ is released, we overwrite the current schedule with the following new schedule, and start over from step 1. \begin{enumerate}
    \item Send every active robot $r_i$ back to its starting position $p_i$.
    \item Wait until time $t = \sqrt{2} \cdot OPT(j)$.
    \item Wake up the swarm in time $OPT(j)$ by following an optimal schedule.
    \item Send every robot $r_i$ back to its starting position $p_i$, and wait there.
\end{enumerate}
This algorithm appears to be $(1+\sqrt{2})$-competitive, since it should complete waking up the swarm at time $\sqrt{2}\cdot OPT(j) + OPT(j) = (1+\sqrt{2})\cdot OPT(j)$. The main thing which remains to be shown is that the algorithm always completes the first step before time $t = \sqrt{2}\cdot OPT(j)$.
\begin{Lemma}\label{nearby}
Under the algorithm described, at any time $T$, each robot $r_i$ is at a distance at most $\frac{T}{1+\sqrt{2}}$ from its starting position $p_i$.
\end{Lemma}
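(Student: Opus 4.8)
The plan is to prove the statement by induction on the release events, tracking how far a robot governed by the current schedule can have drifted from its start. It helps to rewrite the target as $(\sqrt2-1)\,T$, using $\frac{1}{1+\sqrt2}=\sqrt2-1$, so the allowed drift grows with slope $\sqrt2-1<1$. The one structural fact driving everything is $OPT(j)\ge t_j$: since $r_j$ is released at $t_j$ and cannot be activated earlier, every scheduling tree for $r_0,\dots,r_j$ has depth at least $t_j$. Hence the step-2 threshold satisfies $\sqrt2\cdot OPT(j)\ge\sqrt2\,t_j$.

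For the inductive step, assume the bound holds at all times up to $t_j$, the instant $r_j$ is released and the schedule is overwritten; I would verify it on the interval $[t_j,t_{j+1}]$ (continuity at $t_{j+1}$ then feeds the next step). In phase 1, the inductive hypothesis gives that each active robot is within $(\sqrt2-1)t_j$ of $p_i$ at time $t_j$, while asleep robots have never moved and sit at $p_i$. In a strongly connected space a robot reduces its distance to $p_i$ at unit rate, so at time $T$ its displacement is at most $\max\{0,\,(\sqrt2-1)t_j-(T-t_j)\}=\max\{0,\,\sqrt2\,t_j-T\}$, which is $\le(\sqrt2-1)T$ precisely because $T\ge t_j$. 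The same estimate shows every robot is home by time $\sqrt2\,t_j\le\sqrt2\cdot OPT(j)$, i.e.\ before the step-2 deadline; this is exactly the claim flagged just before the lemma, and it guarantees that when phase 3 begins all of $r_0,\dots,r_j$ sit at their starts. Phase 2 (idling at $p_i$) keeps the displacement at $0$.

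During phase 3, running from $\sqrt2\cdot OPT(j)$ to $(1+\sqrt2)OPT(j)$, a robot that started at $p_i$ has moved at most the elapsed time $s=T-\sqrt2\cdot OPT(j)$, so its displacement is $\le s$. The needed inequality $s\le(\sqrt2-1)T$ collapses to $(2-\sqrt2)s\le(2-\sqrt2)OPT(j)$, i.e.\ $s\le OPT(j)$, which is just the length of phase 3; the bound becomes tight exactly at its end, where the displacement may equal $OPT(j)=\frac{T}{1+\sqrt2}$. Finally, in phase 4 the displacement never exceeds $OPT(j)$ while $(\sqrt2-1)T\ge(\sqrt2-1)(1+\sqrt2)OPT(j)=OPT(j)$, so the bound holds comfortably as the robot returns and idles.

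I expect the main obstacle to be the interface between successive schedules: bounding the drift at a restart purely through the previous schedule's guarantee is what forces the inductive framing, and it is precisely there that $OPT(j)\ge t_j$ is used to fit phase 1 inside the waiting window. The remaining subtlety is the tight calibration in phase 3, where the allowed and actual drifts are engineered to coincide exactly when $s=OPT(j)$, so there is no slack to exploit.
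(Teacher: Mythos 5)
Your proof is correct, and its core is the same computation as the paper's: during step 3, the elapsed time $s$ since the step began satisfies $s \le OPT(j)$ and $T \ge s + \sqrt{2}\cdot OPT(j)$, which together force $s \le (\sqrt{2}-1)T = \frac{T}{1+\sqrt{2}}$. The difference is structural. The paper disposes of all other times with the single remark that robots are then ``either at home or moving toward home,'' and it proves that step 1 finishes before the step-2 deadline only afterwards, in Theorem~\ref{2.414}, by invoking this lemma. You instead run an explicit induction over release events and verify the bound phase by phase on $[t_j,t_{j+1}]$; this forces you to show inside the induction, via $OPT(j)\ge t_j$, that every schedule's step 1 completes by time $\sqrt{2}\,t_j\le\sqrt{2}\cdot OPT(j)$, so that step 3 really does begin at time $\sqrt{2}\cdot OPT(j)$ with all robots at their starting positions. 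That is a genuine gain in rigor: the paper's step-3 analysis tacitly assumes each step 3 starts with all robots at home, a fact the paper only establishes using the lemma itself, so strictly speaking the two claims require exactly the joint induction you wrote down. The cost is only length; the mathematical content beyond the paper's two inequalities is bookkeeping.
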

\begin{proof}
We note that it suffices to only consider times during step 3, since at any other time the robots are either at home or moving toward home.

Step 3 started at time at least $\sqrt{2} \cdot OPT(j)$, so if we have been in step 3 for a duration $d$, the current time is at least $T \geq d + \sqrt{2}\cdot OPT(j)$. Step 3 takes a total of $OPT(j)$ time, so we also have $OPT(j) \geq d$. Therefore $T \geq d + \sqrt{2}\cdot d$, and so $d \leq \frac{T}{1+\sqrt{2}}$. 
Since robots move at unit speed, if a robot has been moving for at most $d$ time since the last time it was at its starting position, it must be within $d$ distance of its starting position. Since step 3 always begins with all robots at the starting position, it follows that each robot is always within $\frac{T}{1+\sqrt{2}}$ of its starting position.
\end{proof}

We are now prepared to prove half of our main result.
\begin{theorem}\label{2.414}
The algorithm described is $(1+\sqrt{2})$-competitive under any metric for the time-dependent online freeze-tag problem.
\end{theorem}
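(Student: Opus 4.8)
The plan is to show that the algorithm completes waking up all robots by time $(1+\sqrt{2})\cdot OPT(\sigma)$, where $OPT(\sigma) = OPT(n-1)$ is the optimal value for the full instance. The central issue flagged by the authors is the \emph{feasibility} of the schedule: we must verify that step 1 (sending all robots home) always finishes before step 2's deadline $t = \sqrt{2}\cdot OPT(j)$, so that the robots are genuinely home and ready to execute the optimal schedule in step 3. Lemma~\ref{nearby} is precisely the tool for this, so the proof should reduce the whole claim to applying that lemma at the right moment.

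First I would consider the last robot $r_{n-1}$ to be released, at time $t_{n-1}$. At this moment the algorithm recomputes $OPT(n-1) = OPT(\sigma)$ and restarts its four-step schedule. The key observation is that no robots are released after this, so this final schedule runs to completion without being overwritten. I would then argue that step 1 completes in time: by Lemma~\ref{nearby}, at the release time $T = t_{n-1}$, every robot $r_i$ is within $\frac{t_{n-1}}{1+\sqrt{2}}$ of its home $p_i$, so sending each robot home takes at most $\frac{t_{n-1}}{1+\sqrt{2}}$ time, finishing by time $t_{n-1} + \frac{t_{n-1}}{1+\sqrt{2}}$. To conclude that step 1 beats the step-2 deadline $\sqrt{2}\cdot OPT(\sigma)$, I would use the fact that any valid schedule (in particular the optimal one) cannot finish before the last robot is released, giving $OPT(\sigma) \geq t_{n-1}$. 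Combining these, step 1 finishes by $t_{n-1}(1 + \frac{1}{1+\sqrt{2}}) = t_{n-1}\cdot\sqrt{2} \leq \sqrt{2}\cdot OPT(\sigma)$, using the algebraic identity $1 + \frac{1}{1+\sqrt 2} = \sqrt 2$.

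Once feasibility is established, the timing follows immediately: the robots wait at home until time $\sqrt{2}\cdot OPT(\sigma)$, then step 3 executes an optimal offline schedule which wakes all robots in additional time $OPT(\sigma)$, for a total completion time of $\sqrt{2}\cdot OPT(\sigma) + OPT(\sigma) = (1+\sqrt{2})\cdot OPT(\sigma)$. This yields the ratio $\frac{A(\sigma)}{OPT(\sigma)} = 1+\sqrt{2}$. I would remark that step 3 is valid as an offline schedule precisely because all robots have already been released by time $t_{n-1} \leq \sqrt 2 \cdot OPT(\sigma)$, so there is no release-time obstruction during the wakeup phase.

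The step I expect to be the main obstacle is the feasibility verification, and specifically making rigorous that Lemma~\ref{nearby} applies at the release instant of the \emph{final} schedule — one must be careful that the bound $\frac{T}{1+\sqrt 2}$ refers to distance from home under the \emph{previous} schedule (which is what determines the robots' positions at the handoff moment), and that this is exactly what Lemma~\ref{nearby} guarantees for all times $T$. A secondary subtlety worth addressing is the inequality $OPT(\sigma) \geq t_{n-1}$: I would justify it by noting that in any scheduling tree, robot $r_{n-1}$ cannot be unfrozen before its release time $t_{n-1}$, so the depth of any valid tree, including the optimal one, is at least $t_{n-1}$. With these two points handled, the arithmetic chaining to $(1+\sqrt 2)$ is routine and relies only on the identity $\frac{1}{1+\sqrt 2} = \sqrt 2 - 1$.
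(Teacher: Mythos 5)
Your proposal is correct and takes essentially the same approach as the paper's proof: both reduce the theorem to the feasibility of step 1 via Lemma~\ref{nearby}, use the inequality $OPT \geq t_{n-1}$, and conclude with the same $\sqrt{2}\cdot OPT + OPT = (1+\sqrt{2})\cdot OPT$ arithmetic. The only cosmetic difference is that you apply Lemma~\ref{nearby} at the final release time $t_{n-1}$ (getting step 1 done by $\sqrt{2}\,t_{n-1}$), whereas the paper applies it at time $OPT$ and uses the interval $[OPT,\sqrt{2}\cdot OPT]$; these are interchangeable.
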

\begin{proof} Let $OPT$ be the time that an optimal schedule would need to wake up all of the robots. Note that if the final request is released at a time $t$, then $OPT\geq t$, since it is not possible to satisfy a request before it has been received. Thus, there will be no further requests released after time $OPT$.

Consider the time $\sqrt{2}\cdot OPT$. Since the last time we received a request was at the latest at time $OPT$, we have not modified the schedule since then. Thus, we have had at least $(\sqrt{2}-1)\cdot OPT=\frac{OPT}{1+\sqrt{2}}$ time to complete step 1 of the algorithm. By Lemma~\ref{nearby}, at time $OPT$, each robot is at most $\frac{OPT}{1+\sqrt{2}}$ away from its starting location. Thus, by time $\sqrt{2}\cdot{OPT}$, step 1 will have finished, and all the robots will be at their starting locations. Then step 3 will begin at time $\sqrt{2} \cdot OPT$, and take at most $OPT$ time, for a total time of $(1+\sqrt{2})\cdot OPT$ for when the last robot is awakened.
\end{proof}
\begin{remark}
Our algorithm balances staying close to home with having to wait before executing an optimal schedule. Any algorithm for which Lemma~\ref{nearby} holds for a smaller constant than $\frac{1}{1+\sqrt{2}}$ will necessarily have a larger competitive ratio. If there were an algorithm with a better competitive ratio, it seems that it must be willing to send a robot farther away, while keeping others slightly closer. Our proof in the next two sections can be viewed as giving a formal justification for why it isn't viable to keep most of the robots closer to home.
\end{remark}

One drawback of our algorithm is that computing $OPT(j)$ takes an exponential amount of time, and so it is not particularly efficient in that sense. One could use an approximation algorithm for $OPT(j)$, which would increase the competitive ratio by the approximation factor. However, even $5/3$-approximation is $NP$-hard, and the best known polynomial-time algorithm for general metrics is only a $(\log n)$-approximation~\cite{arkin06}, so more work is required for our algorithm to be executable efficiently.

\section{Example Lower Bound Construction}\label{lower}

In this section, we use a metric which is very similar to the metric described in~\cite{hammar06} which gave a lower bound of 7/3. This metric can give a lower bound on the optimal competitive ratio of $\frac{\sqrt{33}-1}{2}\approx 2.37228\ldots$ by optimizing Lemma~\ref{dichotomy}, which would by itself be an improvement on the $7/3$ bound. We will present a simplified analysis which only gives a lower bound of $3\sqrt{2} - 2 \approx 2.24264\ldots$, but which aligns better with the methods in Section~\ref{tight}. The complicated family of constructions there can be viewed as generalizing the metric used here, and the analysis will follow a similar strategy. In fact, the metric in this section and the Lemmas proved will serve as the base case for an induction argument. We also will use Lemma~\ref{GetBound} as a framework for proving stronger bounds.

A lower bound of $2$ is easily achieved by not revealing where a single frozen robot is until a time equal to its distance from $p_0$, adversarially placing it opposite whatever direction $r_0$ had been moving before that time. To improve on this, Hammar et. al.~\cite{hammar06} force two robots to move in one direction, then travel all the way back the other way to unfreeze the final robot. Roughly speaking, the improvement in our approach comes from forcing the robots to move a little bit farther before turning back.

Our metric space $M$ is formed by a weighted graph with 8 vertices. The origin $p_0$ is where the initial active robot starts, and there will be 7 other points $p_1,\ldots,p_7$, which aren't necessarily starting points for robots. We place $p_1,\ldots,p_6$ on edges at distance 1 from the origin, while $p_7$ is at a distance $r \coloneqq 1 +\sqrt{2}$ from the origin. We also add edges of length one connecting points $(p_1,p_2),(p_3,p_4),$ and $(p_5,p_6)$, forming three equilateral triangles with the origin. The metric consists of all points along edges in this graph, with distances given by shortest path between the points in the graph, and is pictured in the Figure 1 below. The construction in~\cite{hammar06} has exactly the same structure, but with different edge lengths.

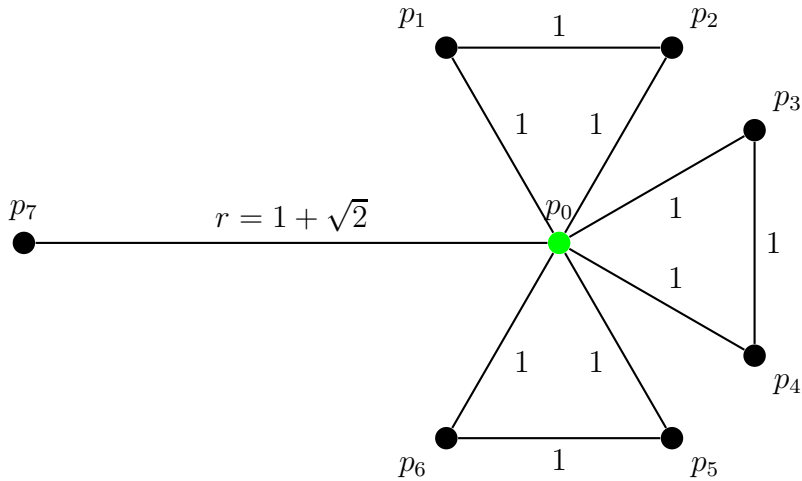
\begin{figure}[h]\label{metric}
\centering
\begin{tikzpicture}[scale=3]
\node [circle,fill=green,inner sep = 3pt, minimum size=5pt,label=above:{$p_0$}] (0) at (0,0) {};
\node [circle,fill=black,inner sep = 3pt, minimum size=5pt,label=above left:{$p_1$}] (1) at (-0.5,0.866) {};
\node [circle,fill=black,inner sep = 3pt, minimum size=5pt,label=above right:{$p_2$}] (2) at (0.5,0.866) {};
\node [circle,fill=black,inner sep = 3pt, minimum size=5pt,label=above right:{$p_3$}] (3) at (0.866,0.5) {};
\node [circle,fill=black,inner sep = 3pt, minimum size=5pt,label=below right:{$p_4$}] (4) at (0.866,-0.5) {};
\node [circle,fill=black,inner sep = 3pt, minimum size=5pt,label=below right:{$p_5$}] (5) at (0.5,-0.866) {};
\node [circle,fill=black,inner sep = 3pt, minimum size=5pt,label=below left:{$p_6$}] (6) at (-0.5,-0.866) {};
\node [circle,fill=black,inner sep = 3pt, minimum
size=5pt,label=above:{$p_7$}] (7) at (-2.372,0) {};

\path [thick]
    (0) edge node[above right] {$1$} (1)
    (0) edge node[above left] {$1$} (2)
    (0) edge node[below right] {$1$} (3)
    (0) edge node[above right] {$1$} (4)
    (0) edge node[below left] {$1$} (5)
    (0) edge node[below right] {$1$} (6)
    (0) edge node[above] {$r = 1 + \sqrt{2}$} (7)
    (1) edge node[above] {$1$} (2)
    (3) edge node[right] {$1$} (4)
    (5) edge node[below] {$1$} (6);
\end{tikzpicture}
\caption{The metric space $M$}
\end{figure}

We can now start describing a TDFT instance for this metric. First fix an algorithm $A$ on the metric $M$. Let $r_0 = (p_0,0)$ and also $r_1 = (p_0,0)$, so that we will have $N\coloneqq 2$ active robots available from the start. Do not release any other robots until time $1$. At time $t=1$, there must a triangle where neither of the robots are. Since the triangles are all the same, without loss of generality we will assume that this is the triangle $p_0p_3p_4$. Then release one robot each at $p_3$ and $p_4$ at time $t=1$. In summary, we define the input $\sigma_A \coloneqq (p_0,0),(p_0,0),(p_3,1),(p_4,1)$. Certainly the time $A(\sigma_A)$ which the algorithm takes to unfreeze both frozen robots is at least 2, since both robots are at least one away from both frozen robots at $t=1$. The following lemma will be key for our analysis.
\begin{Lemma}\label{dichotomy}
For any online algorithm $A$ on the metric $M$ with input $\sigma_A$, we have either $A(\sigma_A) \geq 3\sqrt{2} - 2 =: R$, or there is some time $1+\sqrt{2} \geq t \geq 2$ such that all but at most $N-2\coloneqq 0$ robots are at a distance more than $t(\sqrt{2} - 1)$ from $p_0$, and closer to a frozen robot than $p_0$ is.
\end{Lemma}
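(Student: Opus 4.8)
The plan is to prove the contrapositive: assuming $A(\sigma_A) < R = 3\sqrt{2}-2$, I will produce the required time $t$. First I would record two consequences of the input $\sigma_A$. Since at time $1$ both active robots lie outside the triangle $p_0p_3p_4$, and the three triangles of $M$ meet only at $p_0$, each active robot is at distance at least $1$ from both $p_3$ and $p_4$ at time $1$; hence no frozen robot can be reached before time $2$. Next, chaining is too slow: if a single robot were to visit both $p_3$ and $p_4$, or a newly-woken robot were sent from one to the other, the second visit would occur at time at least $2 + d(p_3,p_4) = 3 > R$. Therefore, when $A(\sigma_A) < R$, the two frozen robots must be woken by the two distinct original robots, each reaching its target; writing $a \le b$ for the two waking times, we have $2 \le a \le b = A(\sigma_A) < R$, and all four robots in play (the two wakers together with the two robots they wake) sit at a frozen point at time $a$ or at time $b$. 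This is what lets me treat the condition on \emph{all} robots (here $N-2 = 0$) by controlling only the two times $a,b$.

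The key tool is that a robot sitting at a frozen point (distance $1$ from $p_0$) at time $\tau$ satisfies $\rho_i(t) \ge 1 - |t-\tau|$ for every $t$, where $\rho_i(t)$ is its distance from $p_0$, simply because it moves at unit speed. Setting $h(t) \coloneqq 1 - t(\sqrt{2}-1)$, I observe that the target condition ``robot $i$ is at distance more than $t(\sqrt{2}-1)$ from $p_0$'' is implied by $|t-\tau_i| < h(t)$; moreover this same inequality forces the robot to lie within $h(t) < 1$ of its frozen point, hence on the triangle $p_0p_3p_4$ and strictly closer to a frozen robot than $p_0$ is. So the two clauses of the second alternative collapse into one, and it suffices to find a single $t \in [2, 1+\sqrt{2}]$ with $|t-a| < h(t)$ and $|t-b| < h(t)$.

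Unwinding these (the binding directions, with $t$ near the interval between $a$ and $b$), they reduce to $t > (b-1)/(2-\sqrt{2})$ and $t < (a+1)/\sqrt{2}$, so I must check that the interval $((b-1)/(2-\sqrt{2}),\ (a+1)/\sqrt{2})$ is nonempty and meets $[2,1+\sqrt{2}]$. Nonemptiness is exactly $\sqrt{2}(b-1) < (2-\sqrt{2})(a+1)$, i.e.\ $\sqrt{2}\,b < (2-\sqrt{2})a + 2$, and this is where the constant is pinned down: since $a \ge 2$ the right side is at least $(2-\sqrt{2})\cdot 2 + 2 = 6 - 2\sqrt{2}$, while $b < R$ gives $\sqrt{2}\,b < \sqrt{2}R = 6 - 2\sqrt{2}$. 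The two bounds coincide at the boundary $a=2$, $b=R$, which is precisely why $R = 3\sqrt{2}-2$ appears. A short check that the endpoints lie in $[2,1+\sqrt{2}]$ (using $a \ge 2$ for the upper endpoint and $b < R < 1+\sqrt{2}$ for the lower one) then completes the argument.

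I expect the main obstacle to be the bookkeeping that lets me treat all robots at once rather than just the two wakers: a robot may leave a frozen point immediately after waking it and retreat toward $p_0$, so I cannot simply fix it at $p_3$ or $p_4$. The two-sided Lipschitz bound $\rho_i(t) \ge 1 - |t-\tau|$ is what absorbs this, and the fact that the two newly-woken robots inherit the waking times $a,b$ (and cannot exit the triangle before time $3$, hence stay committed throughout $[2,1+\sqrt2]$) is what keeps the number of relevant times down to two. The delicate point is that the admissible window for $t$ shrinks to a single instant exactly as $A(\sigma_A) \to R$, so the argument carries no slack; confirming that the interval is genuinely nonempty under the strict inequality $A(\sigma_A) < R$ is the heart of the proof.
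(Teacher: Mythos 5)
Your proposal is correct, but it reaches the dichotomy by a genuinely different route than the paper. The paper argues in the forward direction: it assumes the ball of radius $t(\sqrt{2}-1)$ around $p_0$ is occupied at \emph{every} time $t\in[2,1+\sqrt{2}]$, and then tracks the dynamics --- $r_0$ wakes $p_3$ at time $a\geq 2$, $r_1$ is forced to wake $p_4$ (the same chaining observation you make), so $r_0$ must turn back to re-enter the ball, which cannot happen before time $T\geq \frac{3\sqrt{2}}{2}$; meanwhile $r_1$ is pinned inside the ball until $T$ and then needs $1-T(\sqrt{2}-1)$ more time to reach $p_4$, giving $A(\sigma_A)\geq 1+T(2-\sqrt{2})\geq 3\sqrt{2}-2$. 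You instead take the contrapositive and make the argument static: from $A(\sigma_A)<R$ you extract the two waking times $2\leq a\leq b<R$, note that all four robots touch a frozen point at time $a$ or $b$, and use the two-sided unit-speed bound $\rho_i(t)\geq 1-|t-\tau_i|$ to reduce the conclusion to nonemptiness of the interval $\left(\frac{b-1}{2-\sqrt{2}},\ \frac{a+1}{\sqrt{2}}\right)$, which is exactly the inequality $\sqrt{2}\,b<(2-\sqrt{2})a+2$ pinned down by $a\geq 2$, $b<R$. The arithmetic is the same (your upper endpoint $\frac{a+1}{\sqrt{2}}$ is the paper's re-entry time $T$), but your version buys two things: it treats all four robots uniformly, so it needs no WLOG about which robot returns to the ball (a point the paper handles somewhat informally with ``it was the only robot available''), and it verifies both clauses of the second alternative --- distance from $p_0$ \emph{and} proximity to a frozen point --- whereas the paper's negation quietly drops the second clause. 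What the paper's forward version buys, and yours does not, is the explicit ``bounce-back'' computation that is reused nearly verbatim as Lemma~\ref{ARRRRR} and drives the induction in Lemma~\ref{free}; your static witness-time argument proves the base case equally well but would need to be reshaped into that delay form to generalize to the metrics $M_k$.
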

\begin{proof}

Suppose at every time $t\in [2,1+\sqrt{2}]$ there is a robot within $t(\sqrt{2} - 1)$ of the origin. The earliest time that either of the initial requests can be satisfied is time 2. Without loss of generality, let robot $r_0$ satisfy the request at $p_3$. At the point when $r_0$ reaches $p_3$, the only way to finish before time $t=3$ is if the request at $p_4$ is satisfied by our other initial robot $r_1$. When $r_1$ reaches $p_4$, it is at a distance $1$ from $p_0$, which is greater than $t(\sqrt{2}-1)$ when $t < 1+\sqrt{2}$. Then the lemma will hold if $r_1$ is closest to $p_0$, so assume some robot goes back towards $p_0$ from $p_3$ (assume it's $r_0$). The scenario right when $r_0$ arrives at $p_3$ at time $t\geq 2$ is depicted in Figure 2. 

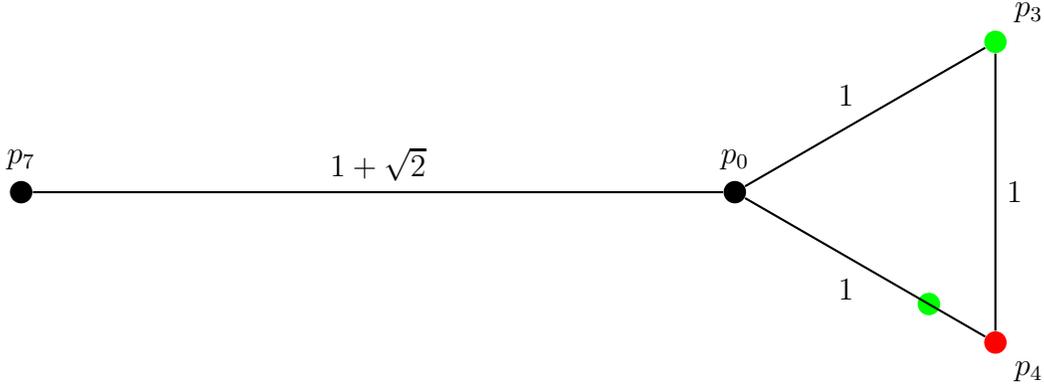
\begin{figure}[h]\label{t2}
\centering
\begin{tikzpicture}[scale=4]
\node [circle,fill=black,inner sep = 3pt, minimum size=5pt,label=above:{$p_0$}] (0) at (0,0) {};

\node [circle,fill=green,inner sep = 3pt, minimum size=5pt,label=above right:{$p_3$}] (3) at (0.866,0.5) {};
\node [circle,fill=red,inner sep = 3pt, minimum size=5pt,label=below right:{$p_4$}] (4) at (0.866,-0.5) {};

\node [circle,fill=green,inner sep = 3pt, minimum size=5pt] (8) at (0.645,-.372) {};

\node [circle,fill=black,inner sep = 3pt, minimum
size=5pt,label=above:{$p_7$}] (7) at (-2.372,0) {};

\path [thick]

    (0) edge node[above left] {$1$} (3)
    (0) edge node[below left] {$1$} (4)

    (0) edge node[above] {$1+\sqrt{2}$} (7)

    (3) edge node[right] {$1$} (4);

\end{tikzpicture}
\caption{Robots at time $t = 2$, where $r_1$ is $2(\sqrt{2} - 1)$ away from $p_0$}
\end{figure}

Let $T$ be the earliest time after reaching $p_3$ that $r_0$ can be within $T(\sqrt{2} - 1)$ of the origin. At that time, $r_0$ will be exactly $1 - T(\sqrt{2} - 1)$ away from $p_3$. Since it left $p_3$ at time at least 2, this means that it is also at most $T-2$ away from $p_3$. Thus, we have that $1 - T(\sqrt{2} - 1) \leq T-2$. Solving this for $T$ gives that $T \geq \frac{3\sqrt{2}}{2}$.

At this time, $r_1$ must still be within $T(1 - \sqrt{2})$ of the origin, since until $T$ it was the only robot available to satisfy our assumption that some robot is within $t(1 -\sqrt{2})$ of the origin. The case where both robots are exactly $T(1 - \sqrt{2})$ away from the origin at time $T$ is depicted in Figure 3.

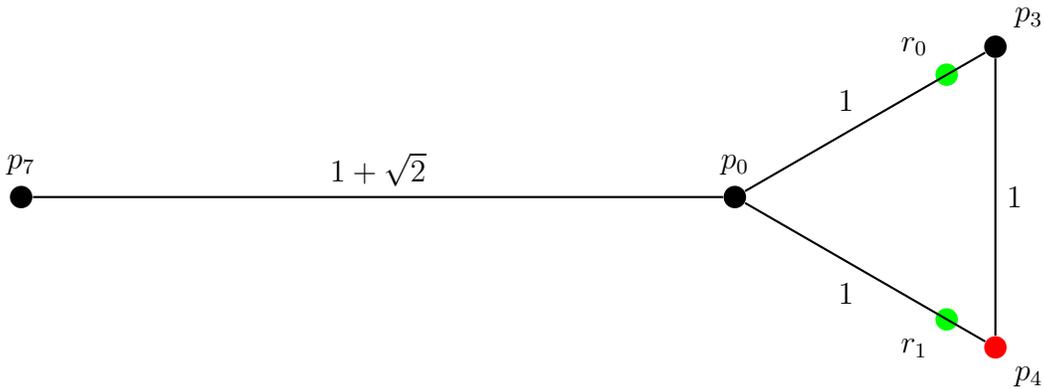
\begin{figure}[h]\label{tT}
\centering
\begin{tikzpicture}[scale=4]
\node [circle,fill=black,inner sep = 3pt, minimum size=5pt,label=above:{$p_0$}] (0) at (0,0) {};

\node [circle,fill=black,inner sep = 3pt, minimum size=5pt,label=above right:{$p_3$}] (3) at (0.866,0.5) {};
\node [circle,fill=red,inner sep = 3pt, minimum size=5pt,label=below right:{$p_4$}] (4) at (0.866,-0.5) {};

\node [circle,fill=green,inner sep = 3pt, minimum size=5pt,label=above left:{$r_0$}] (8) at (0.704,.407) {};
\node [circle,fill=green,inner sep = 3pt, minimum size=5pt,label=below left:{$r_1$}] (8) at (0.704,-.407) {};

\node [circle,fill=black,inner sep = 3pt, minimum
size=5pt,label=above:{$p_7$}] (7) at (-2.372,0) {};

\path [thick]

    (0) edge node[above left] {$1$} (3)
    (0) edge node[below left] {$1$} (4)

    (0) edge node[above] {$1+\sqrt{2}$} (7)

    (3) edge node[right] {$1$} (4);

\end{tikzpicture}
\caption{Robots at time $t = T$, both a distance $T(\sqrt{2} - 1)$ from $p_0$}
\end{figure}

After this time however, now that $r_0$ is sufficiently close to the origin, $r_1$ can immediately walk at full speed to $p_4$. It will still take $r_1$ at least $1 - T(\sqrt{2} - 1)$ additional time to reach $p_4$, for a total time of: \[T + 1 - T(\sqrt{2} - 1) = 1 + T(2-\sqrt{2}) \geq 1 + 3(\sqrt{2} - 1) = 3\sqrt{2} - 2 =:R\]

Thus, as long as there is always a robot within $t(\sqrt{2}-1)$ of the origin, it is not possible to satisfy both requests before time $R\coloneqq 3\sqrt{2} - 2$.
\end{proof}

We'll need to verify some other easy but strange-looking facts.

\begin{Lemma}\label{checkHypotheses}
There exists a schedule for the input $\sigma_A$ on $M$ which unfreezes all robots by time $t=1$, and another schedule where a single robot unfreezes every other robot by time $t = 2$. Moreover, $M$ has $N-1\coloneqq 1$ additional edges of length $1+\sqrt{2}$ connected to $p_0$, on which none of the requests in $\sigma_A$ occur. 
\end{Lemma}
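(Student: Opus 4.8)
The three assertions are all verified directly by exhibiting explicit schedules and counting edges, so the plan is constructive rather than analytic. Nothing here requires the machinery of the earlier lemmas; the work is just to confirm the structural facts that the later framework will consume.

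For the first claim, I would exhibit the obvious two-robot schedule. Since both active robots $r_0$ and $r_1$ begin at $p_0$, and the two frozen robots sit at $p_3$ and $p_4$, each at distance $1$ from the origin, I send $r_0$ straight to $p_3$ and $r_1$ straight to $p_4$. Both arrive at time $1$, which equals the common release time $t_3 = t_4 = 1$, so the activations are legal and all robots are awake by $t = 1$. For the second claim, I would route a single robot along $p_0 \to p_3 \to p_4$. The one sub-fact to check is that $d(p_3,p_4) = 1$: the graph contains a direct triangle edge of length $1$ between $p_3$ and $p_4$, which is shorter than the length-$2$ detour back through $p_0$, so this is indeed the shortest-path distance. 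Hence the robot reaches $p_3$ at time $1$ and $p_4$ at time $2$, and both arrivals respect the release time $1$, giving a single-robot schedule that finishes by $t = 2$.

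For the third claim, I would simply identify the edge $(p_0,p_7)$, which has length $r = 1+\sqrt{2}$ and is incident to $p_0$. Because the requests in $\sigma_A$ occur only at $p_0$, $p_3$, and $p_4$, none of them lies in the interior of this edge; moreover it is the unique edge of length $r$ attached to the origin, which matches the required count $N-1 = 1$.

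The only points requiring any care are the release-time feasibility checks (each arrival must not precede the corresponding $t_i$) and the shortest-path computation $d(p_3,p_4)=1$, and neither is a genuine obstacle. The real role of the lemma is not its difficulty but its bookkeeping: it records that the offline optimum for $\sigma_A$ is fast (achievable in time $1$), that a single robot suffices in time $2$, and that a spare edge of length $r$ is free at the origin. These are precisely the structural hypotheses needed to invoke the general framework of Lemma~\ref{GetBound} and to seed the inductive construction of Section~\ref{tight}.
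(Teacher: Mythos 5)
Your proposal is correct and matches the paper's proof exactly: both send the two active robots directly to $p_3$ and $p_4$ for the time-$1$ schedule, route a single robot $p_0 \to p_3 \to p_4$ along the triangle edge for the time-$2$ schedule, and identify the edge $(p_0,p_7)$ of length $1+\sqrt{2}$ as the required spare edge. The extra care you take with release-time feasibility and the shortest-path check $d(p_3,p_4)=1$ is implicit in the paper's one-line argument.
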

\begin{proof}
The two robots could unfreeze both by going to $p_3$ and $p_4$ right away. One robot could complete $\sigma_A$ by first visiting $p_3$, and then $p_4$, taking $2$ time units. Also, $M$ has the edge connecting $p_0$ and $p_7$ of length $1+\sqrt{2}$ that is not used by $\sigma_A$.
\end{proof}

Now, we can consider the two cases given by Lemma~\ref{dichotomy} to prove a lower bound.

\begin{Lemma}\label{GetBound}
Let $R \leq 1+\sqrt{2}$ be a real number, $N\geq 2$ an integer. Suppose $M$ is a metric such that for any algorithm $A$, there exists an input $\sigma_A$ with $N$ robots at $p_0$ at time $t=0$, such that Lemma~\ref{dichotomy} and Lemma~\ref{checkHypotheses} both hold. Then any online algorithm on $M$ achieves a competitive ratio of at most $R$.
\end{Lemma}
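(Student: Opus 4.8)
The plan is to fix an arbitrary online algorithm $A$, invoke the hypothesis to obtain the input $\sigma_A$ with its $N$ robots starting at $p_0$, and then branch on the two alternatives of Lemma~\ref{dichotomy}. Note that since $A$ is online it sees no request beyond those in $\sigma_A$ until we choose to release one, so running $A$ on $\sigma_A$ produces exactly the behavior that Lemma~\ref{dichotomy} describes. In the first alternative, $A(\sigma_A)\geq R$; here I would feed $A$ the bare input $\sigma_A$ and stop. By Lemma~\ref{checkHypotheses} some schedule completes $\sigma_A$ by time $1$, so $OPT(\sigma_A)\leq 1$ and hence $A(\sigma_A)/OPT(\sigma_A)\geq R$. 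This already witnesses competitive ratio at least $R$, so all the work is in the second alternative.

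In the second alternative there is a time $t\in[2,1+\sqrt{2}]$ at which all but at most $N-2$ robots are simultaneously more than $t(\sqrt{2}-1)$ from $p_0$ and closer to an already-released frozen robot than $p_0$ is. I would punish this spread-out configuration by releasing, at exactly this time $t$, one new request at distance $t$ along each of the $N-1$ fresh length-$(1+\sqrt{2})$ edges guaranteed by Lemma~\ref{checkHypotheses} (such a point exists since $t\leq 1+\sqrt{2}$, and these edges carry no $\sigma_A$-request). First I would bound the optimum of this extended instance from above: an offline schedule sends $N-1$ of its robots straight out the $N-1$ fresh edges from time $0$, each reaching its distance-$t$ target exactly at the release time $t$, and devotes the single remaining robot to $\sigma_A$, which by Lemma~\ref{checkHypotheses} one robot finishes by time $2\leq t$. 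Hence $OPT\leq t$ for the extended instance.

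The crux is showing $A$ needs strictly more than $t(1+\sqrt{2})$. The key geometric fact is that $p_0$ is a cut vertex and every point of a fresh edge is strictly farther from each frozen robot than $p_0$ is; therefore any robot that is closer to a frozen robot than $p_0$ lies off the fresh edges, so to reach a distance-$t$ point on a fresh edge it must pass through $p_0$, traveling more than $t(\sqrt{2}-1)+t=t\sqrt{2}$ and arriving after time $t+t\sqrt{2}=t(1+\sqrt{2})$. A short companion estimate shows no single robot can clear two of the new requests by time $t(1+\sqrt{2})$, since moving between two distance-$t$ points on distinct fresh edges costs an extra $2t$ through $p_0$. Thus only the at-most-$N-2$ robots that are \emph{not} far-and-committed can each cover at most one of the $N-1$ new requests in time, and by pigeonhole at least one new request is left to a far-and-committed robot (or a second trip), forcing $A>t(1+\sqrt{2})$. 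Combining the two bounds, the ratio in this case exceeds $t(1+\sqrt{2})/t=1+\sqrt{2}\geq R$. In either alternative some input forces ratio at least $R$, so the competitive ratio of $A$ is at least $R$. I expect the pigeonhole step to be the main obstacle — carefully arguing that the $\leq N-2$ exceptional robots cannot collectively satisfy all $N-1$ new requests within the deadline, while confirming that frozen robots of $\sigma_A$ which get activated before time $t$ do not secretly provide extra help; everything else is triangle-inequality bookkeeping built on the cut-vertex structure of $p_0$.
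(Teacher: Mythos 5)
Your proposal is correct and follows essentially the same route as the paper's proof: the same two-case split from Lemma~\ref{dichotomy}, the same adversarial extension releasing requests at distance $t$ along the $N-1$ fresh edges at time $t$, the same $OPT\leq t$ bound via $N-1$ robots on the fresh edges plus one robot finishing $\sigma_A$ by time $2$, and the same through-$p_0$ lower bound of $t(1+\sqrt{2})$ for the algorithm. Your explicit pigeonhole/cut-vertex bookkeeping is in fact a slightly more careful rendering of the step the paper compresses into ``$N-2$ robots cannot complete the additional requests in time less than $2t$.''
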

\begin{proof}

Take any algorithm $A$ on $M$ and let $\sigma_A$ be the input given by the hypotheses. By Lemma~\ref{dichotomy}, we know that either $A$ takes at least $R$ time on $\sigma_A$, or there exists some time $1+\sqrt{2} \geq t \geq 2$ when only $N-2$ robots are within $t(\sqrt{2} - 1)$ of the origin. We will consider each of these as separate cases.

\textbf{\underline{Case 1:}} $A$ takes at least $R$ time to complete $\sigma_A$. For $\sigma_A$, we know from Lemma~\ref{checkHypotheses} that the optimal scheduling tree finishes by time 1, so $OPT(\sigma_A) \leq 1$. This gives a competitive ratio of $\frac{A(\sigma_A)}{OPT(\sigma_A)} \geq R$.

\textbf{\underline{Case 2:}} There exists some time $2\leq t \leq 1+\sqrt{2}$ when only $N-2$ robots are closer than $t(\sqrt{2} - 1)$ from $p_0$. Let $p_1,\ldots,p_{n-1}$ be the endpoints other than $p_0$ of the edges of $M$ given Lemma~\ref{checkHypotheses}. Now, we modify $\sigma_A$ to add the additional requests $(p_i\cdot t(\sqrt{2}-1),t)$, which occur at time $t$ along the edge from $p_0$ to $p_i$ located at a distance of $t$ away from $p_0$, for $i = 1,2,\ldots,n-1$. Then the optimal schedule can complete in time $t$. It starts by sending one robot to complete $\sigma_A$ by time $2$ using Lemma~\ref{checkHypotheses}, and the other $N-1$ robots to complete the extra requests at time $t\geq 2$.

For $A$, when the last request is released at time $t$, there are only $N-2$ robots closer than $t(\sqrt{2} - 1)$ to $p_0$. If there are any robots on the edges connecting $p_0$ and $p_i$ for $i = 1,\ldots,n-1$, then they are no closer than $p_0$ is to a frozen robot, so by Lemma~\ref{dichotomy}, they are counted among the $N-2$. Now, $N-2$ robots cannot complete the additional requests in time less than $2t$, which is far too slow. Therefore some robot not among the $N-2$ must unfreeze one of the new robots. Combining the hypotheses of Lemmas~\ref{dichotomy} and~\ref{checkHypotheses}, the shortest route this robot can take goes through $p_0$. This robot (and therefore $A$) must spend a total time of at least $t + t(\sqrt{2} - 1) + t = t(1+\sqrt{2})$ total time to satisfy that request, giving a competitive ratio of at least $\frac{t(1+\sqrt{2})}{t} = 1+\sqrt{2}$.

Since $R \leq 1+\sqrt{2}$, the competitive ratio for any algorithm $A$ is not less than $R$.
\end{proof}

If we apply Lemma~\ref{GetBound} to our particular metric $M$ and $\sigma_A$, it proves that no competitive ratio better than $R\coloneqq 3\sqrt{2} - 2$ is possible. As said before, we could optimize the analysis in this case to show a better bound, but since we will prove a tight bound in the next section, we won't bother. Since we have phrased Lemma~\ref{GetBound} in such a general manner, it suffices to construct metrics $M$ where Lemma~\ref{checkHypotheses} holds and Lemma~\ref{dichotomy} can be proven for a smaller values of $R$.

\section{Tight Lower Bound}\label{tight}
Let $k$ be a non-negative integer. We will define a metric $M_k$ with parameters $N_k$, a natural number, and $T_k$, a rooted tree on $N_k$ vertices that has depth $k$, both of which are a function of $k$.

Let's construct a weighted graph which will form our metric $M_k$. Let $p_1,p_2,\ldots,p_{N_k-1}$ be vertices of degree one connected to a vertex $p_0$ by edges of length $1 +\sqrt{2}$. Then make $N_k+1$ copies of the tree $T_k$ (to be described), rooted at vertices $p_{N_k},p_{N+1},\ldots,p_{2N_k}$. Finally, connect all vertices in these trees to $p_0$ by edges of length 1.

We will describe the tree $T_k$ recursively. We will define $N_k$ to always be the number of vertices in the tree $T_k$, and so also achieve a recursive description of $N_k$. The tree $T_0$ is a single point, and $T_1$ consists of two vertices connected by an edge of length one. Then for $k>1$, let $T_{k+1}$ be rooted at a vertex $v_0$, with descendants $v_1,\ldots,v_{N_k}$. All of the edges connecting $v_0$ and $v_i$ have length $1/2$ For all $i$, let $v_i$ be the root of a copy of $T_k$ with all edge lengths halved. This completes the description of $T_k$ and $N_k$, and so also $M_k$. A sketch of the tree $T_k$ can be seen in Figure~4. Observe that $N_{k+1} = N_k^2 + 1$, that $T_k$ has $k+1$ layers $0,1,\ldots,k$ (where layer $0$ is just the root), but that any path from the root to a leaf has length exactly $1$, for $k\geq 1$. Also, note that $M_1$ is exactly the metric used in the Section~\ref{lower}.

\begin{figure}[h]\label{tT}
\centering
\begin{tikzpicture}[scale=1.65]
\node [circle,fill=black,inner sep = 3pt, minimum size=5pt,label=above:{$v_0$}] (0) at (0,0) {};

\node [circle,fill=black,inner sep = 3pt, minimum size=5pt,label=above:{$v_1$}] (1) at (-4,-3) {};

\node [circle,fill=black,inner sep = 3pt, minimum size=5pt,label=above:{$v_2$}] (2) at (-1.9,-3) {};

\node [circle,fill=black,inner sep = 3pt, minimum size=5pt,label=above left:{$v_3$}] (3) at (-0.3,-3) {};

\node [circle,fill=black,inner sep = 3pt, minimum size=5pt,label=above:{$v_{N_{k-1}}$}] (k) at (4,-3) {};

\node [circle,fill=black,inner sep = 1pt, minimum size = 2pt, label=below:{$N_{k-1}$}] (a) at (0.85,-2) {};

\node [circle,fill=black,inner sep = 1pt, minimum size = 2pt,] (b) at (1.2,-2) {};

\node [circle,fill=black,inner sep = 1pt, minimum size = 2pt,] (c) at (0.5,-2) {};

\node [circle,label=below:{$T_{k-1}$}] at (4,-3.3) {};

\node [circle,label=below:{$T_{k-1}$}] at (-4,-3.3) {};

\node [circle,label=below:{$T_{k-1}$}] at (-1.9,-3.3) {};

\node [circle,label=below:{$T_{k-1}$}] at (-0.3,-3.3) {};


\path [thick]

    (0) edge node[above left] {$1/2$} (1)
    (0) edge node[above left] {$1/2$} (2)
    (0) edge node[above left] {$1/2$} (3)
    (0) edge node[above right] {$1/2$} (k);

\draw [thick] (-4,-3) -- (-4.5,-4) -- (-3.5,-4) -- cycle;

\draw [thick] (4,-3) -- (4.5,-4) -- (3.5,-4) -- cycle;

\draw [thick] (-1.9,-3) -- (-1.4,-4) -- (-2.4,-4) -- cycle;

\draw [thick] (-0.3,-3) -- (-0.8,-4) -- (0.2,-4) -- cycle;

\end{tikzpicture}
\caption{The tree $T_{k}$, in terms of $T_{k-1}$}
\end{figure}
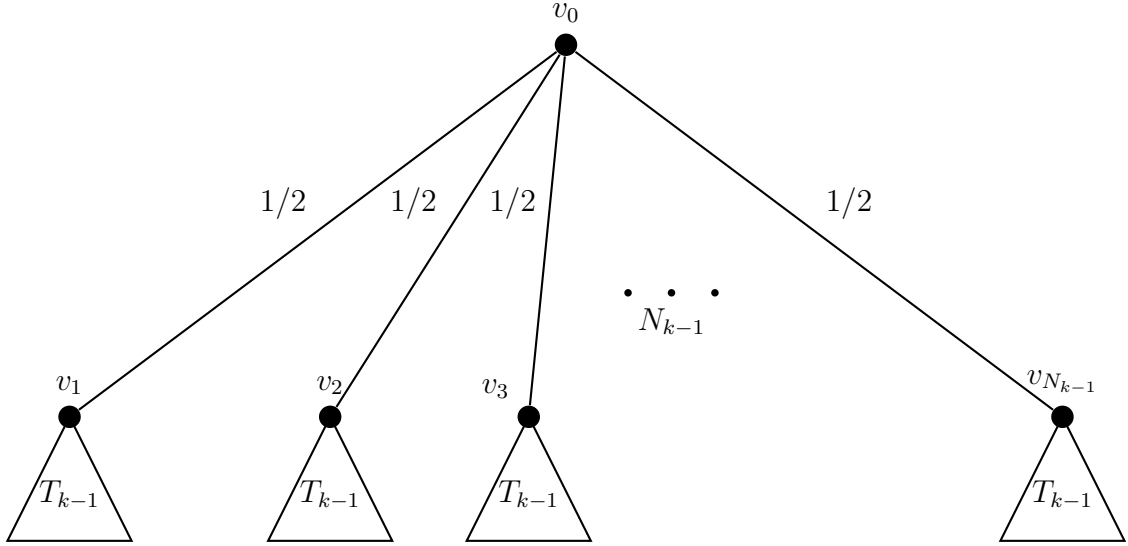

Let's define the input $\sigma_A$, for $A$ an arbitrary online algorithm for the metric $M_k$. Suppose there are $N_k$ starting robots active at $p_0$ at time 0, and no frozen robots. Observe the positions of the robots at time $t=1$. Then since there are $N_k+1$ copies of $T_k$, at least one tree will have no robots along any of the edges to any of the vertices in the tree. Without loss of generality assume this is the tree rooted at $p_{N_k}$. Then let $\sigma_A$ be the input which starts $N_k$ robots active at $p_0$ at time $0$, and releases robots at time $t=1$ at each node of the tree rooted at $p_{N_k}$. Namely, if a node of $T_k$ has down-degree $d$, then release $\max(d-1,1)$ frozen robots at that node. The only exception will be the root of $T_k$, which has down-degree $N_k$, but $\sigma_A$ releases $N_k$ frozen robots at the root instead of $N_k-1$ (this extra robot will make the analysis slightly more clean).
The idea behind this construction of $\sigma_A$ is that it provides just enough robots such that if the root were unfrozen, the robots could cascade through the tree unfreezing everything in 1 unit time.

\begin{Lemma}\label{robotCount}
The number of frozen robots released by $\sigma_A$ in layers $0,\ldots,i-1$ of the tree is equal to the number of nodes in layer $i$.
\end{Lemma}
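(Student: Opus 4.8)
The plan is to prove the identity by induction on the layer index $i$, using the self-similar recursive structure of $T_k$ to pin down both how many nodes sit in each layer and how many robots $\sigma_A$ places in each layer. The two structural facts I would establish first are: (i) every node in layer $i$ is the root of an embedded copy of $T_{k-i}$, so all nodes of a given layer share the same down-degree, equal to $N_{k-i-1}$ for $i<k$ (and $0$ for the leaves in layer $k$); and (ii) writing $\ell_i$ for the number of nodes in layer $i$, the fact that each layer-$i$ node has exactly $N_{k-i-1}$ children gives the recurrence $\ell_{i+1}=N_{k-i-1}\,\ell_i$ with $\ell_0=1$. Both follow by unwinding the definition of $T_k$ as a root joined to $N_{k-1}$ copies of $T_{k-1}$.

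Next I would count the robots layer by layer. By the definition of $\sigma_A$, an interior layer-$i$ node (with $1\le i\le k-1$) releases one fewer robot than its number of children, namely $N_{k-i-1}-1$; summing over the $\ell_i$ nodes of that layer gives $\ell_i(N_{k-i-1}-1)=N_{k-i-1}\ell_i-\ell_i=\ell_{i+1}-\ell_i$ robots in layer $i$. The root (layer $0$) is the single exception: rather than $N_{k-1}-1$ it releases $N_{k-1}=\ell_1$ robots.

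With these counts the induction closes immediately. For the base case $i=1$, the robots in layer $0$ number exactly $\ell_1$, which is the number of nodes in layer $1$; this is precisely the point at which the extra robot placed at the root is needed. For the inductive step, assuming the robots in layers $0,\dots,i-1$ total $\ell_i$, adding the $\ell_{i+1}-\ell_i$ robots of layer $i$ gives $\ell_i+(\ell_{i+1}-\ell_i)=\ell_{i+1}$, the number of nodes in layer $i+1$. Equivalently, the partial sums telescope, each interior layer contributing the increment $\ell_{i+1}-\ell_i$ and the root contributing the starting term $\ell_1$.

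The step I expect to need the most care is the bookkeeping at the two ends of the range. The root has to be treated separately because of its extra released robot, and at the deep end one must check the degree-$1$ interior nodes in layer $k-1$ (the roots of the embedded copies of $T_1$): these contribute $N_0-1=0$ robots, which is exactly what makes the identity still close up at $i=k$. Once the uniform-degree claim and the recurrence $\ell_{i+1}=N_{k-i-1}\ell_i$ are justified from the construction, the remainder is the short telescoping computation above.
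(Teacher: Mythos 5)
Your proof is correct and follows essentially the same route as the paper's: an induction on the layer index in which each layer-$i$ node with $i \geq 1$ contributes one robot fewer than its down-degree, so that layer $i$ contributes $\ell_{i+1}-\ell_i$ robots (in your notation $\ell_i$ for the number of nodes in layer $i$), the root's extra robot supplies the base value $\ell_1$, and the sum telescopes. The only cosmetic difference is that you first establish the uniform down-degree $N_{k-i-1}$ via self-similarity, whereas the paper argues per node with an arbitrary down-degree $d$; nothing substantive changes.

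One point deserves attention, and it is a flaw you have inherited from the paper rather than introduced. The paper's $\sigma_A$ releases $\max(d-1,1)$ robots at a node of down-degree $d$, so the layer-$(k-1)$ nodes (roots of embedded copies of $T_1$, which have $d=1$) literally receive one robot each, not the $N_0-1=0$ that you assign them. Under that literal rule the lemma is in fact false at $i=k$ for $k\geq 2$: layers $0,\ldots,k-1$ would contain $\ell_k+\ell_{k-1}=2\ell_k$ robots (since $\ell_k=\ell_{k-1}$), while layer $k$ has only $\ell_k$ nodes. The paper's own proof makes the same silent replacement of $\max(d-1,1)$ by $d-1$ (``each node has down degree $d$ and $d-1$ robots located at it''), so your accounting matches the construction the authors evidently intend, in which only leaves are topped up to one robot; but strictly speaking, what both you and the paper prove is the lemma for that corrected $\sigma_A$, not for the one defined in the text.
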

\begin{proof}
We use induction on $i$. For the base case $i=1$, the number of frozen robots in layer $0$ is just the number of frozen robots at the root, which by construction is $N_k$, which is also the number of nodes in layer 1 of $T_k$.

Suppose the lemma holds for some $i\geq 1$. In layer $i$, each node has down degree $d$ and $d-1$ robots located at it (since $i\geq 1$). Thus, the total number of nodes in layer $i$ layer plus the number of robots in layer $i$ is equal to the number of nodes in layer $i+1$. By our inductive hypothesis, then, the total number of robots in layers $0,\ldots,i$ is equal to the number of nodes in layer $i+1$.
\end{proof}
We aim to prove versions of Lemmas~\ref{dichotomy} and~\ref{checkHypotheses} for $M_k$, for some real number $R_k \in [2,1+\sqrt{2}]$ depending on $k$. The previous section provides the base case $k=1$, where $R_1 = 3\sqrt{2} - 2$. Also note that everything holds for $k=0$, when the graph consists of three spokes, and $R_0=2$. In general, we will define $R_k \coloneqq 1+\sqrt{2} - (\sqrt{2} - 1)^{k+1}$. One can check that this matches $R_0$ and $R_1$. This formula has the important properties that $R_{k+1} - R_{k} < 2^{-(k+1)}$ and that $\lim\limits_{k\to \infty} R_k = 1+\sqrt{2}$, which are both clear. This definition isn't just arbitrary though; it arises as the amount of time it takes to bounce back and forth between $T_k$ and the ball of radius $t(\sqrt{2} - 1)$ around $p_0$.

\begin{Lemma}\label{ARRRRR}
Suppose that at all times $t \in [R_k,R_{k+1})$ there are at least $N_k-1$ robots within a distance $t(\sqrt{2} - 1)$ of $p_0$. Then there exist $N_k-1$ robots that are unfrozen at time $R_k$ which do not unfreeze any robots at any time $t \in [R_k,R_{k+1}]$.
\end{Lemma}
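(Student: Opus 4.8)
The plan is to combine a geometric observation about the ball around $p_0$ with one sharp numerical coincidence built into the $R_k$. First I would record the two distances that matter. Since $R_k(\sqrt{2}-1) = 1-(\sqrt{2}-1)^{k+2}$ and $R_{k+1}(\sqrt{2}-1)=1-(\sqrt{2}-1)^{k+3}$, the radius $t(\sqrt{2}-1)$ stays strictly below $1$ for every $t\in[R_k,R_{k+1})$. Every frozen robot released by $\sigma_A$ sits at a vertex of the relevant copy of $T_k$, hence at distance exactly $1$ from $p_0$, so the ball of radius $t(\sqrt{2}-1)$ around $p_0$ contains no frozen robot. Consequently any robot lying within $t(\sqrt{2}-1)$ of $p_0$ at time $t$ is itself unfrozen and is unfreezing no one at that instant. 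This makes the $\ge N_k-1$ robots guaranteed near $p_0$ at time $R_k$ the natural candidates for the idle robots we must produce.

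These candidates are not obviously idle for the whole window: a robot starting within $R_k(\sqrt{2}-1)$ of $p_0$ could first reach distance $1$ (and so possibly wake someone) as early as $R_k(2-\sqrt{2})+1$, which is strictly less than $R_{k+1}$. The heart of the argument is the identity
\[
R_k(2-\sqrt{2})+1 \;=\; \sqrt{2}\,R_{k+1}-1,
\]
which I would verify directly from $R_j = 1+\sqrt{2}-(\sqrt{2}-1)^{j+1}$. Using it, suppose a robot within $R_k(\sqrt{2}-1)$ of $p_0$ at time $R_k$ unfreezes some robot at a time $t^\ast$; since it must travel out to distance $1$, we get $t^\ast \ge R_k(2-\sqrt{2})+1 = \sqrt{2}\,R_{k+1}-1$. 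For any later $t<R_{k+1}$ its distance from $p_0$ is at least $1-(t-t^\ast)$ by unit speed, and the identity forces $1-(t-t^\ast) > t(\sqrt{2}-1)$. Thus such a robot is strictly outside the ball for the entire remainder of the window: a near-origin robot that ever wakes someone leaves the ball and cannot return before time $R_{k+1}$.

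With this ``no return'' dichotomy in hand, I would finish by a counting and selection argument. The robots near $p_0$ at time $R_k$ split into those that wake someone during $[R_k,R_{k+1}]$ --- which by the previous paragraph disappear from the ball permanently --- and those that never do, which are exactly the idle robots sought. Feeding the hypothesis back in at times $t\nearrow R_{k+1}$, where at least $N_k-1$ robots still lie in the (now waker-free) ball, I would deduce that at least $N_k-1$ of the original near-origin robots are idle, and then check that a robot close to $p_0$ near the end of the window cannot reach distance $1$ by time $R_{k+1}$, so idleness extends to the closed endpoint.

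The main obstacle is this last counting step, not the geometry. The hypothesis controls each time-slice $\{\text{robots within } t(\sqrt{2}-1) \text{ of } p_0\}$ separately, so a priori the near-origin population could be a rotating cast that collectively unfreezes many robots while no fixed $N_k-1$ of them stay idle throughout. The no-return dichotomy is precisely what forbids a waker from re-entering to prop up the count, but to make the bound airtight I also need to handle robots that enter the ball from outside --- recently unfrozen robots drifting back toward $p_0$ --- and argue they cannot be what sustains the count; this is where I expect to lean on the surrounding inductive setup and on Lemma~\ref{robotCount} to limit the robots in play. Finally I would be careful about the $\le$ versus $<$ in ``within distance $t(\sqrt{2}-1)$'' and about the half-open hypothesis interval $[R_k,R_{k+1})$ against the closed conclusion interval $[R_k,R_{k+1}]$, since the sharp identity leaves no numerical slack at the endpoints.
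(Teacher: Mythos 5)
Your first two ingredients are correct and match the paper's: the ball of radius $t(\sqrt{2}-1)$ contains no frozen robots throughout the window, and your identity $R_k(2-\sqrt{2})+1=\sqrt{2}\,R_{k+1}-1$ is a repackaging of the paper's coincidence $R_{k+1}=R_k+2x$, where $x=\frac{1-R_k(\sqrt{2}-1)}{\sqrt{2}}$; the resulting ``no-return'' dichotomy for robots near $p_0$ at time $R_k$ is a valid consequence. The genuine gap is exactly the counting step you flag, and your proposed repair points in the wrong direction. Reading the hypothesis at times $t\nearrow R_{k+1}$ cannot work, because at those times the ball really can be sustained by robots that are useless for the conclusion: an active robot stationed at distance $1-\delta$ from $p_0$ at time $R_k$ (legally outside the ball, since the hypothesis only lower-bounds the ball population) can wake a frozen robot at time $R_k+\delta$, and either it or the robot it wakes can run inward and be inside the ball from time $\frac{1+R_k+\delta}{\sqrt{2}}$ onward, which is strictly before $R_{k+1}$ for small $\delta$. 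So near $R_{k+1}$ the ball may consist entirely of non-witnesses (robots frozen at $R_k$, or robots that woke someone during the window), and no argument can rule this out --- certainly not Lemma~\ref{robotCount}, which counts frozen robots per layer and carries no timing information.

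What is missing is the complementary \emph{entry} bound, and with it the correct instant at which to apply the count. Any robot at distance $1$ from $p_0$ at some time $t_0\geq R_k$ (this covers both freshly unfrozen robots and robots that just woke someone) cannot be within $t(\sqrt{2}-1)$ of $p_0$ at any $t<\frac{1+t_0}{\sqrt{2}}$, and $\frac{1+R_k}{\sqrt{2}}=R_k+x$. The paper pinches the two bounds at the single critical time $t_c=R_k+x$: no such newcomer can be in the ball at time $t_c$ (barring exact full-speed radial trajectories with $t_0=R_k$), so the $\geq N_k-1$ robots in the ball at $t_c$ were all unfrozen at time $R_k$ and have not touched distance $1$ since; and since the ball radius at $t_c$ equals $1-x$, none of them can reach a frozen robot before $t_c+x=R_k+2x=R_{k+1}$. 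Both halves are needed, and they close only because $R_{k+1}=R_k+2x$; your proposal keeps only the exit half (anchored at $R_k$ rather than at $t_c$), which is precisely why your ``rotating cast'' worry cannot be dismissed within your framework. Your concern about the closed endpoint is legitimate but secondary --- the paper itself is loose about robots that wake someone at exactly $R_k$ or exactly $R_{k+1}$ along extremal trajectories --- whereas the entry bound and the choice of $t_c$ are the substance of the paper's proof.
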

\begin{proof}
A similar argument to what follows appeared within the proof of Lemma~\ref{dichotomy}.

At time $R_k$, there exist at least $N_k-1$ robots that are within $R_k(\sqrt{2}-1)$ of $p_0$. If these robots stay within $t(\sqrt{2} - 1)$ of $p_0$, then they will never reach a frozen robot in this time interval, since all frozen robots are at least one away from $p_0$ and $R_{k+1} < 1+\sqrt{2} = 1/(\sqrt{2}-1)$. Then at some time $t$ some robot that either was frozen at time $R_k$ or unfroze another robot must enter the ball of radius $t(\sqrt{2} -1)$, otherwise there will always be these $N_k-1$ robots that never unfreeze other robots. Now, since this robot entering the ball must have come from a point at a distance 1 from $p_0$, the minimal time $x>0$ after $R_k$ required such that $1 - x \leq (x+R_k)(\sqrt{2}-1)$ is $x = \frac{1-R_k(\sqrt{2} - 1)}{\sqrt 2}$. Then, a robot exiting the ball has a distance $x$ remaining to reach a frozen robots, which therefore cannot occur until time at least $R_k + 2x$.

Now, it can be checked that in fact, $R_{k+1} = R_k + \sqrt{2}(1-R_k(\sqrt{2}-1)) = R_k + 2x$. Therefore it is impossible to avoid having $N_k - 1$ robots never unfreeze a robot in this time interval.
\end{proof}
\begin{Lemma}\label{check3}
Lemma~\ref{checkHypotheses} holds for the metric $M_k$ with input $\sigma_A$ and the integer $N_k$.
\end{Lemma}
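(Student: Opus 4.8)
The plan is to verify, for the parameter choice $N = N_k$, each of the three assertions bundled into Lemma~\ref{checkHypotheses}: a schedule finishing by time $1$, a single-robot schedule finishing by time $2$, and the presence of $N_k-1$ unused edges of length $1+\sqrt{2}$ at $p_0$. The third assertion is immediate from the construction of $M_k$: by definition $M_k$ contains the vertices $p_1,\ldots,p_{N_k-1}$, each joined to $p_0$ by an edge of length $1+\sqrt{2}$, and $\sigma_A$ releases all of its frozen robots inside the single copy of $T_k$ rooted at $p_{N_k}$, whose vertices hang off $p_0$ by edges of length $1$; hence none of the $1+\sqrt{2}$ edges carries a request.

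For the first assertion I would use the $N_k$ robots starting at $p_0$, assigning one to each node of $T_k$. The tree $T_k$ has exactly $N_k$ nodes, and every node is joined directly to $p_0$ by an edge of length $1$, so each starting robot travels from $p_0$ to its node and arrives at time exactly $1$, the instant the frozen robots appear. Since visiting a node wakes every frozen robot located there (they are co-located, at distance $0$, regardless of how many $\sigma_A$ released), dispatching one robot per node wakes all frozen robots by time $1$, giving in particular $OPT(\sigma_A)\leq 1$.

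The substantive assertion is the second: a single robot wakes everything by time $2$. Here I would send the lone robot straight from $p_0$ to the root of $T_k$, arriving at time $1$, and then let the awakened robots \emph{cascade} down the tree. The release pattern of $\sigma_A$ is engineered so the cascade never runs short: when a robot reaches a node of down-degree $d$ and wakes the $\max(d-1,1)$ frozen robots waiting there, it has at least $1+\max(d-1,1)\geq d$ active robots at that node, enough to send one down each of the $d$ outgoing edges, with the root covered by the $N_k$ robots of the exception clause together with the arriving robot. A robot reaching a node at tree-distance $\delta$ from the root does so at time $1+\delta$, hence always after the frozen robots have appeared at time $1$; and since every root-to-leaf path in $T_k$ has length exactly $1$, the last wake-ups occur at time $1+1=2$.

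The one step needing genuine care is the cascade bookkeeping in the second assertion, where I must simultaneously check that each visited node has enough freshly woken robots to continue downward and that the whole process fits inside the time budget of $2$. The supply is guaranteed node by node by the inequality $1+\max(d-1,1)\geq d$ (equivalently, aggregated layer by layer, by Lemma~\ref{robotCount}), and the timing is guaranteed by the scale-invariant fact, built into the recursive halving of edge lengths, that all root-to-leaf paths in $T_k$ have length $1$. Everything else follows directly from the construction.
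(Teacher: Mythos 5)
Your proof is correct and takes essentially the same route as the paper: one starting robot per node of $T_k$ (via the length-$1$ edges from $p_0$) for the time-$1$ schedule, a single robot sent to the root followed by a cascade for the time-$2$ schedule, and the edges from $p_0$ to $p_1,\ldots,p_{N_k-1}$ for the third condition. The only difference is organizational: the paper runs the cascade as a structural induction on $k$, using the halved edge lengths of the $T_{k-1}$ copies, whereas you verify it directly with the per-node supply inequality $1+\max(d-1,1)\geq d$ and the fact that every root-to-leaf path in $T_k$ has length exactly $1$ --- a slightly more self-contained phrasing of the same argument.
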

\begin{proof}

A schedule can unfreeze all the robots released in $\sigma_A$ by time $t=1$ by having each of our starting $N_k$ robots go directly from $p_0$ to a different vertex of the tree rooted at $p_{N_k}$. At time $1$, they will all arrive and wake up all of the robots.

We can also have a single robot unfreeze all of the robots by time $t=2$. First, our one robot moves to $p_{N_k}$ by time $t=1$. Then, each of the newly unfrozen robots moves down the tree to the root of some copy of $T_{k-1}$. There were $N_k$ robots frozen at $p_{N_k}$, so all of the $N_k$ roots can be reached by time $t=3/2$. By induction, $T_{k-1}$ can be traversed by a single robot in one time unit (the base case $k=1$ is clear), but these copies of $T_{k-1}$ have edges of half the length, so all of these robots can be traversed in half a time unit. Therefore the entire tree can be visited by time $t=2$. 

Finally, the edge from $p_0$ to the vertices $p_1, p_2, \ldots, p_{N_k-1}$ give us $N_k-1$ additional edges of length $1+\sqrt 2$ on which no requests from $\sigma_A$ occur.
\end{proof}

It now remains to prove a sufficiently strong version of Lemma~\ref{dichotomy}, so that we can use Lemma~\ref{GetBound}. Call an unfrozen robot \emph{free} at time $t$ if it is more than $t(\sqrt{2} - 1)$ away from $p_0$. In the context of Lemma~\ref{dichotomy}, we are looking for a time $t$ when all but possibly $N_k-2$ robots are free. It will be helpful to use this to constrain how many robots can be free by a certain time. 

\begin{Lemma}\label{free}
Suppose $A$ is an algorithm such that at any time under the input $\sigma_A$, there are at least $N_k-1$ robots which are not free. Then $A$ cannot unfreeze as many robots as there are nodes in layers $0,\ldots,i$ before time $R_i$.
\end{Lemma}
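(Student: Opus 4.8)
The plan is to induct on $i$, tracking the function $U(t)$ equal to the number of frozen robots $A$ has unfrozen strictly before time $t$. Writing $\ell_j$ for the number of nodes in layer $j$ and $L_i\coloneqq \ell_0+\cdots+\ell_i$, the goal is exactly to show $U(R_i)<L_i$. For the base case $i=0$ we have $L_0=1$ and $R_0=2$, so we must show no robot is unfrozen before time $2$; this is immediate from the construction of $\sigma_A$, since at time $t=1$ the chosen tree carries no active robot on any incident edge and every tree node sits at distance exactly $1$ from $p_0$, so the earliest an active robot can stand on a tree node (and hence unfreeze anything) is time $1+1=2=R_0$.

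For the inductive step I would fix the window $W\coloneqq[R_{i-1},R_i)$, which lies inside $[2,1+\sqrt2)$ since $R_i<1+\sqrt2$, and isolate three structural facts. First, only \emph{free} robots can unfreeze during $W$: every frozen robot sits at distance $1$ from $p_0$, and a point at distance $1$ satisfies $1>t(\sqrt2-1)$ precisely when $t<1+\sqrt2$, so any robot standing on a tree node during $W$ is free. Second, the hypothesis that at least $N_k-1$ robots are never free gives the budget $\#\{\text{free at }t\}\le U(t)+1$, because the total number of robots present is $N_k+U(t)$. Third, Lemma~\ref{ARRRRR}, applied with $R_{i-1},R_i$ in place of $R_k,R_{k+1}$, encodes the bounce cost: maintaining $N_k-1$ robots inside the growing ball $B_{t(\sqrt2-1)}(p_0)$ forces a full round trip of duration $R_i-R_{i-1}=2x$ for a robot to leave the ball, reach a distance-$1$ node, and have its vacancy refilled, whereas motion \emph{within} a subtree already entered is cheap because the edges of $T_k$ are halved layer by layer.

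I would then combine these with Lemma~\ref{robotCount}, which is the bridge from node counts to robot counts: the frozen robots sitting in layers $0,\dots,i-1$ number exactly $\ell_i$, and these are precisely the robots whose unfreezing lets the cascade push into layer $i$. The inductive hypothesis $U(R_{i-1})<L_{i-1}$ bounds the free-robot budget at the start of $W$, the budget together with the bounce cost of Lemma~\ref{ARRRRR} limits how many fresh subtrees can be entered from the origin region during a single window, and robotCount converts this into a bound of $\ell_i$ new unfreezings before the bounce clock runs out at $R_i$, giving
\[
U(R_i)\le U(R_{i-1})+\ell_i<L_{i-1}+\ell_i=L_i.
\]

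The main obstacle is the bookkeeping hidden in that last step. A \emph{single} trapped robot can in fact reach a distance-$1$ node strictly inside $W$, since the one-way trip out of the ball takes only $\sqrt2\,x<2x$ time; so the argument cannot rule out unfreezing one robot at a time by a per-robot impossibility, and must instead invoke the \emph{global} replacement constraint of Lemma~\ref{ARRRRR}, namely that keeping $N_k-1$ robots in the ball at \emph{all} times forbids any productive round trip within one window. Making precise the statement that the frontier advances by at most one layer per window, while correctly charging robots that are themselves unfrozen midway through $W$ and instantly become free (thereby raising the budget), and separating the slow bounce-moves from the cheap in-subtree cascade-moves, is the delicate part; the exact identity $R_i=R_{i-1}+\sqrt2\,(1-R_{i-1}(\sqrt2-1))$ coming out of Lemma~\ref{ARRRRR} is what makes the window lengths line up with the halved edge-lengths of the layer structure so that the induction closes.
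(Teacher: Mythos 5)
Your outline follows the paper's proof skeleton closely: induction on $i$, the same base case, the observation that only free robots can do any unfreezing during the window $[R_{i-1},R_i)$ (every tree node is at distance exactly $1$ from $p_0$, and $1 > t(\sqrt2-1)$ for $t<1+\sqrt2$), a budget of at most $U(R_{i-1})+1 \le x_{i-1}$ usable robots obtained from the inductive hypothesis together with Lemma~\ref{ARRRRR} (where $x_{i-1}$ denotes the number of nodes in layers $0,\ldots,i-1$), and the concluding chain $U(R_i)\le U(R_{i-1})+\ell_i < x_i$. But the step you explicitly defer as ``the delicate part'' --- that at most $\ell_i$ robots are unfrozen during the window --- \emph{is} the inductive step, and your sketch supplies neither of the two facts the paper uses to establish it.

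The first missing fact is the confinement argument: $R_i - R_{i-1} < 2^{-i}$, while every edge leaving a node in layers $0,\ldots,i-1$ of $T_k$ has length at least $2^{-i}$. Hence no robot --- whether free at time $R_{i-1}$ or unfrozen midway through the window --- can fully traverse such an edge before time $R_i$, so those edges can be deleted for counting purposes; the tree then decomposes into $x_{i-1}$ isolated nodes and subtrees rooted at layer $i$, and since distinct components are also at distance at least $2$ from each other through $p_0$, each usable robot together with any cascade it triggers can wake at most the robots of a single component. This inequality is exactly what rules out the scenario your sketch cannot exclude, namely a cascade running down several layers within one window; gesturing that ``the window lengths line up with the halved edge-lengths'' is not a substitute for stating and using it. The second missing fact is the exchange argument that converts confinement into the number $\ell_i$: each isolated node in layers $0,\ldots,i-1$ carries at least $N_{k-i}$ frozen robots, while each layer-$i$ subtree contains at most $N_{k-i}$ robots in total, so with exactly $x_{i-1}$ usable robots and exactly $x_{i-1}$ isolated nodes, the total number of robots woken in the window is maximized by assigning one robot to each isolated node; only then does Lemma~\ref{robotCount} identify that maximum as $\ell_i$, the number of nodes in layer $i$. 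Your appeal to robotCount produces the right number, but without the component decomposition and the comparison of component sizes there is no argument that $\ell_i$ actually bounds the window's unfreezings. As it stands, the proposal is a correct map of the paper's strategy with its central step left unproved.
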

\begin{proof}
We use induction on $i$. The base case, $i=0$, simply says that $A$ cannot unfreeze any robots before time $2$. This is true because $\sigma_A$ chooses to put all of the frozen robots on a tree that no active robot is near. The nearest a robot could be at $t=1$ is $p_0$, which is distance 1 away from any node of the tree.

Let $x_i$ be the number of nodes in layers $0,\ldots,i$. Suppose that our lemma holds for $i-1$. Then just before time $R_{i-1}$, there are at most $x_{i-1}$ free robots, since of our initial $N_k$ robots at least $N_k-1$ of them must be near $p_0$ and not be free.

By Lemma~\ref{ARRRRR}, there must be $N_k-1$ robots which were unfrozen before $R_{i-1}$ that never unfreeze any robots between the times $R_{i-1}$ and $R_i$. In particular, without loss of generality we will assume that these are the $N_k-1$ robots that are required to stay near $p_0$, so the only robots we have available during this interval are the $x_{i-1}$ free robots.

Since $R_i - R_{i-1} < 2^{-i}$, and every edge coming out of a node in one of layers $0,\ldots,i-1$ has length at least $2^{-i}$, it is not possible for any free robot to fully traverse one of these edges between times $R_i$ and $R_{i-1}$. Thus, we can assume these edges don't exist for bounding the number of robots that can be woken up between times $R_i$ and $R_{i-1}$. Ignoring these edges, we have a bunch of subtrees in the bottom layers, each of which have a total of $N_{k-i}$ robots on them, and $x_{i-1}$ single nodes in layers $0,\ldots,i-1$ which each have at least $N_{k-i}$ robots frozen on them.

Each of our $x_{i-1}$ robots can wake up at most all of the robots in either one subtree or one single node. Since every node in layers $0,\ldots,i-1$ has at least $N_{k-i}$ robots at it, and each of the subtrees has at most $N_{k-i}$ robots in it, the most robots that can be woken up is maximized by sending a free robot to each of the nodes in layers $0,\ldots,i-1$.
By Lemma~\ref{robotCount}, the total number of robots in these layers equal to the number of nodes in layer $i$. Thus, we now have a total of $x_i$ free robots, so we cannot have more than $x_i$ free robots before time $R_i$.

If we have $x_i$ free robots, then since we started with 1 free robot, we have unfrozen less than $x_i$ robots by this point.
\end{proof}


\begin{Corollary}\label{check2}
Lemma~\ref{dichotomy} holds on the metric $M_k$ with input $\sigma_A$, the integer $N_k$, and $R_k\coloneqq 1+\sqrt{2} - (\sqrt{2} - 1)^{k+1}$.
\end{Corollary}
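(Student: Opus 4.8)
The plan is to establish the $M_k$ version of Lemma~\ref{dichotomy} by contraposition, extracting the time bound $R_k$ from the counting Lemma~\ref{free}. Fix an algorithm $A$ and suppose the second alternative fails, so that for every $t \in [2,\,1+\sqrt{2}]$ at least $N_k-1$ robots are \emph{not} simultaneously farther than $t(\sqrt{2}-1)$ from $p_0$ and closer to a frozen robot than $p_0$ is. The goal is to show that this forces $A(\sigma_A) \geq R_k$, which is precisely the first alternative.

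First I would reconcile this assumption with the hypothesis actually consumed by Lemmas~\ref{ARRRRR} and~\ref{free}, which ask instead that at least $N_k-1$ robots be \emph{not free}, i.e.\ within distance $t(\sqrt{2}-1)$ of $p_0$. The two conditions differ only for a robot that is free yet not closer to a frozen robot than $p_0$. Since $M_k$ is a star of edges meeting at $p_0$ and all frozen robots lie on a single copy of $T_k$ at distance at least $1$ from $p_0$, such a robot sits on a spoke or on a frozen-robot-free copy of $T_k$, so every route from it to a frozen robot runs through $p_0$; its distance to the nearest frozen robot therefore exceeds $t(\sqrt{2}-1)+1$, at least as large as that of any robot held inside the ball of radius $t(\sqrt{2}-1)$ about $p_0$. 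Because Lemmas~\ref{ARRRRR} and~\ref{free} use the designated $N_k-1$ robots only through the fact that they cannot reach, and hence cannot unfreeze, a frozen robot quickly, these free-but-off-tree robots may be treated exactly like non-free robots. Thus the failure of the second alternative supplies the hypothesis needed to run those lemmas on all of $[R_0,\,R_k]\subseteq[2,\,1+\sqrt{2}]$, where $R_0=2$.

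With the hypothesis in hand I would apply Lemma~\ref{free} with $i=k$. Since $T_k$ has depth $k$, its layers $0,\ldots,k$ contain all $N_k$ of its nodes, so $A$ cannot unfreeze $N_k$ distinct robots before time $R_k$. On the other hand $\sigma_A$ places at least one frozen robot at every node of $T_k$ (as $\max(d-1,1)\geq 1$), so completing $\sigma_A$ requires unfreezing at least $N_k$ robots. Hence $A$ has not finished by time $R_k$, giving $A(\sigma_A)\geq R_k$ and establishing the dichotomy; combined with Lemma~\ref{check3}, this is exactly what lets Lemma~\ref{GetBound} be invoked with the pair $(N_k,R_k)$.

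The main obstacle is the bridge in the second paragraph: the dichotomy carries the extra clause ``closer to a frozen robot than $p_0$ is,'' whereas Lemmas~\ref{ARRRRR} and~\ref{free} are phrased purely in terms of the ball of radius $t(\sqrt{2}-1)$ about $p_0$. One must verify carefully that a free-but-off-tree robot is genuinely no more useful for unfreezing than a robot kept near $p_0$ — equivalently, that re-running those lemmas with ``free'' replaced throughout by ``free and closer to a frozen robot'' leaves their proofs intact — since otherwise the counting bound of Lemma~\ref{free} would rest on fewer genuinely unavailable robots than it requires, and the deduction of $A(\sigma_A)\geq R_k$ would break down.
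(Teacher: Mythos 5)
Your overall skeleton coincides with the paper's own proof, which is exactly: assume the second alternative of Lemma~\ref{dichotomy} fails, apply Lemma~\ref{free} with $i=k$ to conclude that fewer than $N_k$ robots can be unfrozen before time $R_k$, and note that $\sigma_A$ releases at least $N_k$ frozen robots, so $A(\sigma_A)\geq R_k$. The paper does this in four sentences and simply treats the negation of the second alternative as if it were the hypothesis of Lemma~\ref{free} (at least $N_k-1$ robots inside the ball of radius $t(\sqrt{2}-1)$), discarding the clause ``closer to a frozen robot than $p_0$ is.'' You are right that the honest negation is weaker---it only gives, at each time, $N_k-1$ robots each of which is either in the ball or else no closer to a frozen robot than $p_0$---and you correctly identify bridging this mismatch as the crux; the paper never addresses it.

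However, your bridge contains a genuine error. The claim that a free-but-not-closer robot ``sits on a spoke or on a frozen-robot-free copy of $T_k$'' (hence at distance more than $t(\sqrt{2}-1)+1$ from every frozen robot) is false, because ``closer to a frozen robot than $p_0$'' is a time-dependent condition: the set of frozen robots shrinks as the algorithm runs. A robot standing on the request copy of $T_k$ itself becomes ``no closer to a frozen robot than $p_0$'' the instant other robots finish unfreezing everything within distance $1$ of it, without moving at all; for such a robot only the bound $\geq 1$ holds, not $>t(\sqrt{2}-1)+1$. This breaks the plan to ``re-run those lemmas with `free' replaced throughout.'' The proof of Lemma~\ref{ARRRRR} does not use the designated robots only through their inability to reach a frozen robot quickly; it also uses, crucially, that a robot can \emph{enter} the designated set only by physically traveling inward from distance $1$, which is exactly where $x=\frac{1-R_k(\sqrt{2}-1)}{\sqrt{2}}$ and the recursion $R_{k+1}=R_k+2x$ come from. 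With your enlarged set, membership can be acquired instantaneously by the status change just described, so a ball robot could be released early by such a swap; and a ball robot released at time $R_k$ can reach a frozen robot by time $1+R_k(2-\sqrt{2})$, which is strictly less than $R_{k+1}$ since their difference is $(\sqrt{2}-1)\bigl(1-R_k(\sqrt{2}-1)\bigr)>0$. So without the entry-timing step the conclusion of Lemma~\ref{ARRRRR} no longer follows, and your counting via Lemma~\ref{free} rests on an unproved statement. A correct bridge should instead begin from the observation that any robot that is ever at distance $\geq 1$ from all frozen robots at some time $t\geq 2$ can never unfreeze anyone in the remainder of $[2,1+\sqrt{2}]$, because $1+\sqrt{2}-t\leq\sqrt{2}-1<1$; such robots are permanently out of the workforce. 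But one must then redo the counting in Lemmas~\ref{ARRRRR} and~\ref{free} allowing a class of useless robots that can be joined without travel---which is precisely the verification your proposal defers rather than carries out.
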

\begin{proof}
Suppose that at every time $t \in [2,1+\sqrt{2}]$ there exists at least $N_k-1$ robots at a distance more than $t(\sqrt{2}-1)$ from $p_0$, otherwise the result is immediate. Then we can apply Lemma~\ref{free} with $i=k$ to get that the number of robots unfrozen must not be more than there are nodes in $T_k$ (which is $N_k$) before time $R_k$. Now, the number of frozen robots released in $\sigma_A$ is more than the number of nodes in $T_k$, since each node has at least one frozen robot but most have more. Therefore $A(\sigma) \geq R_k$.
\end{proof}

\begin{proof}[Proof of Theorem~\ref{main}]
\ \newline
First, Theorem~\ref{2.414} says that our algorithm is $(1+\sqrt{2})$-competitive.

Due to Corollary~\ref{check2} and Lemma~\ref{check3}, we can use Lemma~\ref{GetBound} on $M_k$, with input $\sigma_A$, integer $N_k$, and $R_k\coloneqq 1+\sqrt{2} - (\sqrt 2 - 1)^{k+1}$. Then any online algorithm on $M_k$ achieves a competitive ratio of at most $R_k$. Fix $\eps > 0$. Since $\lim\limits_{k\to \infty} R_k = 1+\sqrt{2}$, choose $k$ such that $1+\sqrt{2} - R_k < \eps$. Therefore no algorithm is $(1+\sqrt{2} -\eps)$-competitive on $M_k$.
\end{proof}

Our analysis required trees which have size exponential in $1/\eps$, since $N_k$ is doubly exponential in $k$ and $1+\sqrt{2} - R_k$ is singly exponential in $k$. Could there be an algorithm that is on the order of $(1 + \sqrt{2} - O(\frac{1}{\log n}))$-competitive, for metrics coming from weighted graphs on at most $n$ vertices?

\newpage
\section{Acknowledgements}

The generalization of the metric in Section~\ref{lower} to the construction of Section~\ref{tight} was inspired by a discussion with Yuan Yao. This paper began as a final project in an advanced algorithms course taught by David Karger and Aleksander Madry. We thank them for teaching the course, and for their useful feedback on a draft of this paper. Finally, we thank three peer editors, Leo Castro, Roberto Ortiz, and Tianyi Zeng for their helpful comments on an early draft.

\bibliographystyle{plain}
\bibliography{references}

\end{document}